\documentclass[conference]{IEEEtran}

\usepackage{algorithmic}
\usepackage{array}
\usepackage[caption=false,font=footnotesize,labelfont=sf,textfont=sf]{subfig}
\usepackage{textcomp}
\usepackage[hidelinks]{hyperref}
\usepackage{url}
\usepackage{verbatim}
\usepackage{graphicx}
\usepackage{amsthm}
\theoremstyle{definition}
\newtheorem{definition}{Definition}
\usepackage{amsmath}
\usepackage{amssymb}
\theoremstyle{plain}
\newtheorem{lemma}{Lemma}
\newcolumntype{L}[1]{>{\centering\arraybackslash}p{#1}}

\newcommand{\cmark}{\textcolor{green!60!black}{\ding{51}}} 			% 绿色对勾
\newcommand{\xmark}{\textcolor{red}{\ding{55}}}           			% 红色叉号
\newcommand{\qmark}{\textcolor{orange}{\textbf{$\otimes$}}} 
\usepackage{booktabs,makecell,xcolor,pifont,threeparttable}
\usepackage{multirow}
\usepackage[ruled,vlined]{algorithm2e}

\begin{document}
		%
		% paper title
		% Titles are generally capitalized except for words such as a, an, and, as,
		% at, but, by, for, in, nor, of, on, or, the, to and up, which are usually
		% not capitalized unless they are the first or last word of the title.
		% Linebreaks \\ can be used within to get better formatting as desired.
		% Do not put math or special symbols in the title.
		\title{RAIN: Secure and Robust Aggregation under Shuffle Model of Differential Privacy}

		% author names and affiliations
		% use a multiple column layout for up to three different
		% affiliations
		\author{\IEEEauthorblockN{Yuhang Li\IEEEauthorrefmark{1},
				Yajie Wang\IEEEauthorrefmark{1},
				Xiangyun Tang\IEEEauthorrefmark{2},
				Peng Jiang\IEEEauthorrefmark{1},
				Yu-an Tan\IEEEauthorrefmark{1},
				Liehuang Zhu\IEEEauthorrefmark{1}}
			\IEEEauthorblockA{\IEEEauthorrefmark{1}Beijing Institute of Technology, Beijing, China\\
				Email: \{yuhangl, wangyajie19, pengjiang, tan2008, liehuangz\}@bit.edu.cn}
			\IEEEauthorblockA{\IEEEauthorrefmark{2}Minzu University of China, Beijing, China\\Email: xiangyunt@muc.edu.cn}}
		
		% use for special paper notices
		%\IEEEspecialpapernotice{(Invited Paper)}

		% \IEEEoverridecommandlockouts
		% \makeatletter\def\@IEEEpubidpullup{6.5\baselineskip}\makeatother
		% \IEEEpubid{\parbox{\columnwidth}{
				% 		Network and Distributed System Security (NDSS) Symposium 2026\\
				% 		23 - 27 February 2026 , San Diego, CA, USA\\
				% 		ISBN 979-8-9919276-8-0\\  
				% 		https://dx.doi.org/10.14722/ndss.2026.[23$|$24]xxxx\\
				% 		www.ndss-symposium.org
				% }
			% \hspace{\columnsep}\makebox[\columnwidth]{}}

		% make the title area
		\maketitle
		
		% As a general rule, do not put math, special symbols or citations
		% in the abstract
		\begin{abstract}
			Secure aggregation is a foundational building block of privacy-preserving learning, yet achieving robustness under adversarial behavior remains challenging. Modern systems increasingly adopt the shuffle model of differential privacy (Shuffle-DP) to locally perturb client updates and globally anonymize them via shuffling for enhanced privacy protection.
			However, these perturbations and anonymization distort gradient geometry and remove identity linkage, leaving systems vulnerable to adversarial poisoning attacks. 
			Moreover, the shuffler, typically a third party, can be compromised, undermining security against malicious adversaries.
			To address these challenges, we present Robust Aggregation in Noise (RAIN), a unified framework that reconciles privacy, robustness, and verifiability under Shuffle-DP. At its core, RAIN adopts sign-space aggregation to robustly measure update consistency and limit malicious influence under noise and anonymization.
			Specifically, we design two novel secret-shared protocols for shuffling and
			aggregation that operate directly on additive shares and preserve Shuffle-DP’s
			tight privacy guarantee. In each round, the aggregated result is verified to ensure correct aggregation and detect any selective dropping, achieving malicious security with minimal overhead.
			Extensive experiments across comprehensive benchmarks show that RAIN maintains strong privacy guarantees under Shuffle-DP and remains robust to poisoning attacks with negligible degradation in accuracy and convergence. It further provides real-time integrity verification with complete tampering detection, while achieving up to 90× lower communication cost and 10× faster aggregation compared with prior work.
		\end{abstract}
		
		% no keywords

		% For peer review papers, you can put extra information on the cover
		% page as needed:
		% \ifCLASSOPTIONpeerreview
		% \begin{center} \bfseries EDICS Category: 3-BBND \end{center}
		% \fi
		%
		% For peerreview papers, this IEEEtran command inserts a page break and
		% creates the second title. It will be ignored for other modes.
		\IEEEpeerreviewmaketitle

		\section{Introduction}
		Distributed learning has emerged as a powerful paradigm for collaboratively training models across multiple data holders while keeping raw data local. Federated learning (FL) enables multiple clients to jointly train
		models by exchanging gradient or parameter updates with a central
		server instead of sharing raw data~\cite{r21}.
		However, subsequent studies have shown that such updates can still leak
		sensitive information through gradient inversion~\cite{r2} and membership
		inference attacks~\cite{r37}, and are vulnerable to poisoning by
		malicious clients~\cite{r50}.

		Differential privacy (DP)~\cite{r38} is widely adopted to mitigate
		such privacy leakage.
		Early approaches primarily employed DP in a centralized setting \cite{r39}, which relies on a fully trusted server to aggregate updates and inject noise. In contrast, local differential privacy (LDP) \cite{r40} has been recognized as more suitable for decentralized or cross-device learning scenarios \cite{r41}. Although LDP eliminates the need for a trusted aggregator, its strict privacy guarantees often require large noise~\cite{r42}, significantly degrading model utility and slowing convergence.
		
		Motivated by the need for strong privacy with practical utility, the shuffle
		model of differential privacy (Shuffle-DP)~\cite{r43} has been
		introduced and extended to distributed learning scenarios~\cite{r44}.
		By combining local perturbation with shuffling-based
		anonymization, Shuffle-DP amplifies privacy guarantees while
		requiring substantially less noise than LDP. As a result, Shuffle-DP has emerged as a prominent framework for privacy-preserving federated learning.
		
		Nevertheless, incorporating Shuffle-DP into FL introduces new security gaps.
		Noise and anonymization can erase the signals that robust aggregation relies on,
		leaving Shuffle-DP pipelines vulnerable to adversarial manipulation.
		These gaps manifest along two dimensions.
		
		\begin{figure}[t]
			\centering
			\includegraphics[width=1\linewidth]{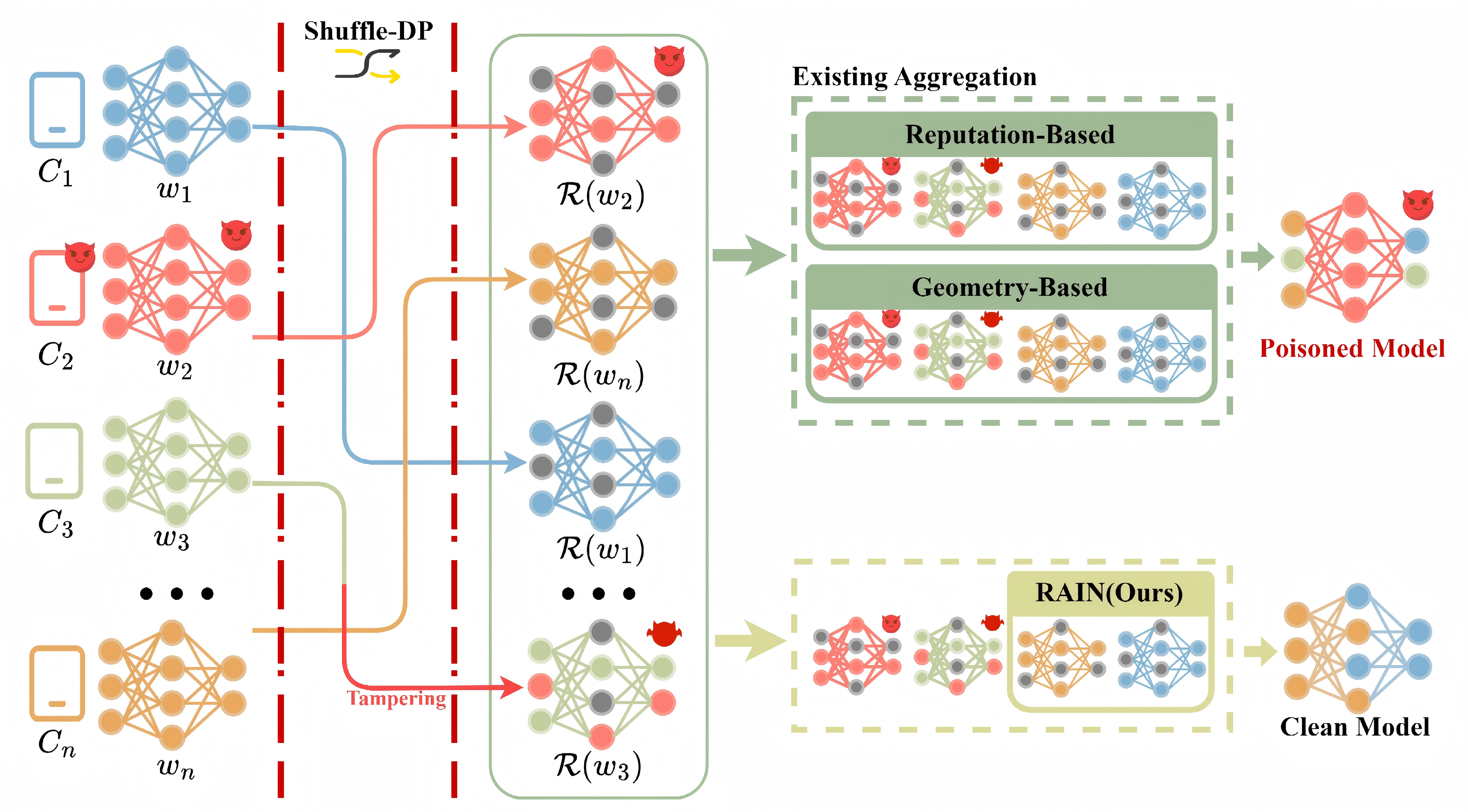}
			\caption{\textbf{Effect of Shuffle-DP on aggregation robustness.}
				Existing robust aggregation fails under Shuffle-DP:
				local noise distorts gradient geometry, anonymization removes
				identity continuity, and they are further vulnerable to tampering by the shuffler.
				RAIN preserves robustness under the same Shuffle-DP setting without
				relying on a trusted shuffler.
			}
			\label{fig_1}
		\end{figure}
		
		\textbf{1) Shuffle-DP Breaks Robust Aggregation Assumptions.}
		As shown in Fig.~\ref{fig_1}, conventional robust aggregation
		methods fail to remain effective under the Shuffle-DP setting. These approaches can be broadly categorized
		into geometry-based and reputation-based paradigms~\cite{r3}.
		Geometry-based schemes~\cite{r26,r31,r22} detect and filter malicious
		updates by exploiting geometric consistency among client gradients,
		such as distance or directional similarity.
		Reputation-based schemes~\cite{r24,r25,r23} rely on temporal
		consistency across training rounds, maintaining per-client trust
		scores or historical reputations to weight future updates.
		However, these principles no longer hold once Shuffle-DP is applied. This phenomenon is consistently observed in our experimental
		evaluation under Shuffle-DP.
		Geometric relationships among client updates are distorted, and
		identity continuity is removed under Shuffle-DP.
		Consequently, trustworthy clients could no longer be distinguished by the server.
		\textbf{Shuffle-DP breaks the assumptions underlying both
			geometry-based and reputation-based aggregation strategies,
			rendering them unreliable in adversarial settings.}
		
		\textbf{2) Untrusted Shuffler Can Tamper with Client Updates.}
		Existing Shuffle-DP frameworks~\cite{r17,r13} also exhibit fundamental limitations in their threat modeling.
		Nearly all prior studies assume that the shuffler behaves faithfully permuting the perturbed gradients received from clients and forwarding them to the server for aggregation.
		This strong trust assumption simplifies the protocol design but introduces a critical blind spot: if the shuffler is compromised or colludes with the server, it can tamper with, discard, or selectively reorder client updates during the shuffle process.
		Crucially, existing Shuffle-DP designs provide no mechanism to verify the correctness of the shuffle itself. As a result, \textbf{an untrusted shuffler can tamper with client updates
			without detection, undermining aggregation integrity and the
			privacy guarantees promised by Shuffle-DP.}
		
		To address the breakdown of robust aggregation under Shuffle-DP, we present Robust Aggregation in Noise (RAIN).The key insight behind RAIN is that,
		while Shuffle-DP anonymization distorts gradient magnitudes and geometry,
		the sign patterns of benign and malicious updates remain statistically separable.
		RAIN leverages this property to perform robust aggregation in the sign space,
		replacing geometry-dependent similarity with identity-free trust scoring
		to bound malicious influence under noise and anonymization.
		In addition, RAIN employs a secure secret-shared shuffle and aggregation protocol that executes all operations over additive shares, preserving privacy amplification without relying on a trusted shuffler.
		It further incorporates a lightweight streaming integrity layer
		with message authentication codes (MACs) to detect selective-drop
		and tampering attacks in real time, thereby ensuring that all
		client updates are included exactly once and remain unmodified,
		achieving verifiable aggregation integrity with minimal overhead.
		Taken together, these components form a unified Shuffle-DP
		framework that reconciles privacy preservation, robustness, and verifiable integrity.
		By operating on binary sign updates, RAIN further reduces
		communication cost, demonstrating that strong security and
		efficiency can coexist in federated learning.
		
		We implement and evaluate RAIN on representative benchmarks.
		Results show that RAIN consistently outperforms existing Shuffle-DP frameworks across model utility, robustness, and communication efficiency dimensions.
		For model utility, RAIN achieves stronger convergence and better privacy–utility trade-offs. For example, under $\varepsilon = 10$, RAIN attains 87.6\% accuracy on MNIST compared with 80.6\% for the state-of-the-art~\cite{r33}. Regarding robustness, RAIN maintains over 80\% accuracy even when the majority of clients are malicious, indicating strong resilience in malicious environments. Meanwhile, the integrity verification module achieves 100\% tampering detection with negligible runtime overhead,
		while the secret-shared shuffle naturally yields up to $90\times$ lower communication cost and $10\times$ faster runtime compared with prior Shuffle-DP frameworks.
		
		We highlight our main contributions below:
		\begin{itemize}
			\item
			We propose Robust Aggregation in Noise (RAIN), a sign-space aggregation mechanism, which quantifies updates consistency in a noise-resilient manner, bounding malicious influence even under anonymized and noisy updates.
			
			\item
			Building upon this mechanism, We design two novel secret-shared protocols for shuffling and aggregation that operate directly on additive shares and preserve Shuffle-DP’s tight privacy guarantees, enabling robust aggregation without relying on a trusted shuffler.
			
			\item
			We design a lightweight MAC-based layer for real-time verification during shuffling and aggregation, detecting message dropping and tampering attacks with minimal overhead while maintaining DP guarantees.
			
			\item
			We conduct extensive experiments on datasets including MNIST, Fashion-MNIST, and CIFAR-10, demonstrating that RAIN outperforms existing Shuffle-DP frameworks in privacy-utility trade-offs, robustness under adversarial settings, and communication efficiency.
		\end{itemize}

		\section{Related Work}
		
		\subsection{Privacy-Preserving Federated Learning}
		
		While robust aggregation seeks to ensure reliability under malicious updates, privacy-preserving federated learning focuses on protecting clients’ data from inference and exposure.
		A broad spectrum of privacy mechanisms has been developed to safeguard the information contained in gradient updates, among which differential privacy and secure multi-party computation (MPC) \cite{r35} are the most widely adopted.
		
		Early works applied central differential privacy (CDP) \cite{r39}, where a trusted server perturbs the aggregated model updates before dissemination.
		Although CDP provides formal privacy guarantees, it assumes that the server behaves honestly, a strong assumption rarely satisfied in practice.
		To eliminate this trust assumption, local differential privacy \cite{r40} was proposed, letting each client add random noise to its local gradients before transmission.
		LDP guarantees client-level privacy even against an untrusted aggregator, but the heavy noise required under strict privacy budgets severely hurts model utility and slows convergence~\cite{r34}.
		
		To alleviate this privacy–utility tension, recent work has adopted the shuffle model of differential privacy \cite{r43}. In Shuffle-DP, clients first locally perturb their updates and then submit them to a shuffler that randomly permutes and anonymizes messages before aggregation.
		This anonymization yields privacy amplification by shuffling, allowing the same formal guarantee with substantially less noise than in LDP.
		Shuffle-DP has thus become an attractive paradigm for scalable privacy-preserving FL and inspired frameworks such as FLAME \cite{r47}, and Camel \cite{r33}, which integrate DP with lightweight encryption or secure aggregation to improve communication efficiency.
		However, these systems focus mainly on confidentiality, assuming benign clients and an honest shuffler, and therefore remain vulnerable to model-poisoning or integrity attacks.
		
		Complementary to DP-based mechanisms, secure computation techniques such as MPC, homomorphic encryption (HE), and secure aggregation have been extensively studied to prevent raw gradient disclosure \cite{r11,r10}.
		These cryptographic protocols guarantee that the server learns only aggregated results but cannot access individual updates.
		Recent examples include FLOD and ELSA \cite{r6}, which employ MPC for secure model aggregation under honest-but-curious assumptions.
		While such schemes provide strong privacy guarantees, they typically incur high communication overhead and lack robustness to malicious manipulation within the encrypted domain.
		
		As summarized in Table \ref{tab:fl-comparison}, existing privacy-preserving FL frameworks address privacy or robustness only in isolation.
		RAIN unifies the strengths of both MPC and Shuffle-DP, achieving strong model privacy, robust aggregation, and high system efficiency under adversarial conditions.
		It remains effective even with malicious servers and provides verifiable, privacy-preserving aggregation without compromising accuracy or communication cost.

		\newcolumntype{C}[1]{>{\centering\arraybackslash}m{#1}}
		
		\begin{table}[!t]
			\centering
			\caption{Positioning RAIN Among Existing Approaches.}
			\label{tab:fl-comparison}
			\footnotesize
			\setlength{\tabcolsep}{0.5pt}
			\begin{threeparttable}
					\begin{tabular}{
							C{1.6cm}  % Solution
							C{1.5cm}  % Privacy Mechanism
							C{1.6cm}  % Robust Aggregation
							C{1.6cm}  % Malicious Security
							C{1.3cm}  % Verifiability
							C{0.9cm}  % Efficiency
						}
						\toprule
						\textbf{Proposed Works} &
						\textbf{Privacy Mechanism} &
						\textbf{Privacy Preservation} &
						\textbf{Robust Aggregation} &
						\textbf{Integrity} &
						\textbf{Efficiency} \\
						\midrule
						FedAvg \cite{r21}   & --       & \xmark & \xmark  & \xmark & \xmark \\
						FLguard \cite{r8}   & MPC      & \cmark & \cmark  & \xmark & \xmark \\
						FLOD \cite{r9}      & MPC      & \cmark & \qmark  & \xmark & \xmark \\
						ELSA \cite{r4}      & MPC      & \cmark & \qmark  & \cmark & \cmark \\
						FLAME \cite{r22}    & SDP      & \qmark & \cmark  & \qmark & \xmark \\
						Camel \cite{r33}    & MPC+SDP  & \cmark & \xmark  & \cmark & \qmark \\
						%				\addlinespace[2pt]
						%				\hline
						\addlinespace[2pt]
						\textbf{RAIN(Ours)}       & MPC+SDP   & \cmark & \cmark & \cmark & \cmark \\
						\bottomrule
					\end{tabular}%
				
				\vspace{3pt}
				\begin{flushleft}
					\small
					\cmark fully addressed, \qmark partially addressed, \xmark not addressed.
				\end{flushleft}
			\end{threeparttable}
		\end{table}

		\subsection{Robust Aggregation in Federated Learning}
		
		Federated learning is inherently vulnerable to poisoning and Byzantine attacks, where adversarial clients upload manipulated updates to corrupt the global model \cite{r45}.
		Because the local training process is invisible to the server, malicious participants can craft arbitrary gradients to degrade accuracy or embed hidden behaviors \cite{r46}.
		These threats have motivated a large body of research on robust aggregation, which aims to maintain global model reliability even when a fraction of clients behave adversarially.
		
		Early defenses rely on geometric consistency among client updates.
		Methods such as Krum, Trimmed-Mean, and Median~\cite{r31} treat malicious updates as outliers in Euclidean or coordinate space and aggregate only the statistically central gradients~\cite{r29}.
		While simple and effective under homogeneous data, these methods break down when benign updates become diverse or when privacy noise perturbs the geometric structure.
		Later, direction-aware approaches like FLTrust and FLAME exploit cosine similarity between client and reference models to detect anomalies, whereas FLARE and FLOD \cite{r9} project gradients into representation or sign space to improve resilience.
		Despite their sophistication, these techniques remain geometry-based, presuming visible, comparable gradients and a trustworthy aggregation channel~\cite{r27}.
		
		Complementary to geometric defenses, reputation-based aggregation builds temporal trust across training rounds.
		FoolsGold penalizes clients that exhibit correlated update directions to mitigate Sybil attacks,
		while DnC \cite{r20} and CONTRA learn dynamic trust weights or contrastive embeddings to identify persistent malicious behavior.
		These schemes shift from per-round geometry to historical behavioral consistency, improving long-term robustness but still assuming stable, identifiable clients.
		Once gradients become anonymized or shuffled, as in privacy-preserving settings these assumptions collapsed, leaving reputation-based defenses ineffective.
		
		Recent advances attempt to bridge robustness with privacy and efficiency\cite{r28}.
		FLOD leverages Hamming-distance–based sign aggregation under multi-server MPC to tolerate over 50\% Byzantine clients with encrypted updates, and FLARE utilizes feature-space detection for stealthy backdoors.
		However, these designs either impose high communication and computation costs or rely on partial trust assumptions, limiting scalability.
		
		Overall, existing robust aggregation schemes fundamentally rely on two forms of consistency—geometric alignment among visible gradients and temporal identity across training rounds.
		None remain effective once Shuffle-DP anonymization simultaneously destroys gradient geometry and identity continuity, exposing a fundamental incompatibility between privacy and robustness.
		
		RAIN addresses this gap by operating directly in the sign space, where update consistency is measured through Hamming distance rather than geometric similarity.
		This geometry-free design preserves robustness even under strong perturbation and anonymity, while its secure two-server realization ensures verifiable aggregation without relying on a trusted shuffler.

		\section{Preliminaries}
		
		\subsection{Differential Privacy}
		
		This section summarizes the basic definitions of differential privacy relevant to our setting. 
		In the central model of DP, a trusted curator collects users’ data and applies a randomized mechanism before releasing any aggregate statistics.  
		Formally, two datasets $\mathcal{D} = \{d_1, \ldots, d_n\}$ and $\mathcal{D}' = \{d'_1, \ldots, d'_n\}$ drawn from the same domain $\mathcal{X}$ are said to be neighboring if they differ in only one record, i.e., there exists an index $i$ such that $d_i \neq d'_i$ while $d_j = d'_j$ for all $j \neq i$.  
		A randomized mechanism $\mathcal{M}$ is differentially private if its outputs on neighboring datasets remain statistically close.
		
		\begin{definition}[\textbf{Central Differential Privacy}]
			A randomized mechanism $\mathcal{M}: \mathcal{X}^n \rightarrow \mathcal{Y}$ satisfies \textbf{$(\varepsilon, \delta)$-DP} if, for all neighboring datasets $\mathcal{D}, \mathcal{D}' \in \mathcal{X}^n$ and every subset $S \subseteq \mathcal{Y}$,
			\begin{equation}
				\Pr[\mathcal{M}(\mathcal{D}) \in S] \le e^{\varepsilon}\Pr[\mathcal{M}(\mathcal{D}') \in S] + \delta.
			\end{equation}
		\end{definition}
		
		In contrast, the local model removes the need for a trusted server by letting each user perturb its data locally before sending it for aggregation~\cite{r49}.  
		The following definition formalizes the local privacy guarantee.
		
		\begin{definition}[\textbf{Local Differential Privacy}]
			A mechanism $\mathcal{R}: \mathcal{X} \rightarrow \mathcal{Y}$ satisfies \textbf{$\varepsilon_0$-LDP} if, for any two possible inputs $d, d' \in \mathcal{X}$ and any subset $S \subseteq \mathcal{Y}$,
			\begin{equation}
				\Pr[\mathcal{R}(d) \in S] \le e^{\varepsilon_0}\Pr[\mathcal{R}(d') \in S].
			\end{equation}
		\end{definition}
		
		While LDP eliminates the need for trust, achieving strict local privacy often requires large noise, which significantly harms model utility.  
		
		\begin{definition}[\textbf{Shuffled Differential Privacy}]
			A protocol consisting of a local randomizer $\mathcal{R}$, a shuffler
			$\mathcal{S}$, and an analyzer $\mathcal{A}$ satisfies
			\textbf{$(\varepsilon, \delta)$-Shuffle-DP} if the overall mechanism
			$\mathcal{A} \circ \mathcal{S} \circ \mathcal{R}^n$ is
			$(\varepsilon, \delta)$-differentially private in the central model.
		\end{definition}
		The shuffle model bridges the gap between the central and local models by
		introducing an anonymization step between local randomization and aggregation.
		In this model, each client first applies a local randomizer and then sends the
		perturbed output to a shuffler, which randomly permutes all messages before
		forwarding them to the analyzer.
		By breaking the linkage between users and their messages, shuffling enables
		significant privacy amplification compared to pure local DP.
		
		RAIN adopts the Gaussian mechanism as the local randomizer, where each client
		clips its gradient and adds Gaussian noise $\eta_i \sim \mathcal{N}(0, \sigma^2 I_d)$.
		When combined with shuffling, the resulting protocol satisfies
		$(\varepsilon, \delta)$-Shuffle-DP with significantly amplified privacy compared
		to pure LDP, providing the formal basis for client-level privacy in our setting.

		\subsection{Cryptographic Foundations}
		
		\textbf{Secure Two-Party Computation (2PC).}
		RAIN relies on a lightweight two-server protocol based on additive secret sharing,
		which enables two non-colluding servers to jointly compute functions over private data
		without revealing individual inputs.
		
		\textbf{Additive Secret Sharing.}
		For a value $x \in \mathbb{Z}_p$, client $C_i$ generates two random shares
		$(x_0, x_1)$ such that $x = x_0 + x_1 \bmod p$.
		Server $S_t$ ($t \in \{0,1\}$) holds share $x_t$.
		Neither server can reconstruct $x$ alone, but they can jointly perform secure
		additions and multiplications over the shares.
		
		\textbf{Secure Multiplication via Beaver Triples.}
		Given shared values $\langle a \rangle^{\mathbb{A}}$ and $\langle b \rangle^{\mathbb{A}}$,
		the parties use a pre-generated Beaver triple
		$(x, y, z)$ satisfying $z = xy$ to compute $ab$ securely:
		\begin{equation}
			\langle ab \rangle_t^{\mathbb{A}}
			= -t e f + f \langle a \rangle_t^{\mathbb{A}}
			+ e \langle b \rangle_t^{\mathbb{A}}
			+ \langle z \rangle_t^{\mathbb{A}},
		\end{equation}
		where $e = a - x$ and $f = b - y$ are reconstructed
		after local subtraction of shares.
		This primitive allows all subsequent arithmetic in RAIN—including
		Hamming-distance evaluation, ReLU-based weighting, and aggregation—to be executed
		entirely over secret shares.
		
		For efficiency, RAIN employs only arithmetic sharing over $\mathbb{Z}_p$;
		Boolean sharing and garbled circuits are omitted,
		as no bit-level logic is required in our design.

		\section{Problem Statement}
		
		\subsection{System Model}
		Figure~\ref{fig_2} illustrates the federated learning setting considered in this work. 
		We assume a population of $N$ clients $\mathcal{C}=\{C_1,\ldots,C_N\}$ and two aggregation servers $S_0$ and $S_1$ that jointly provide a secure shuffling-and-aggregation service. 
		Each client $C_i$ holds a private dataset $D_i$ and, at round $t$, performs local training to obtain a gradient update $g_i^t=\nabla\mathcal{L}_i(w_t)$. 
		To provide local randomization compatible with the shuffle model of DP, the client clips its update, adds Gaussian noise, and applies sign encoding, yielding a noisy binary vector $g_i^{\mathrm{bool}}$.
		
		All noisy sign updates are transmitted to the two servers over secure channels. 
		The servers execute a secure shuffling procedure that permutes the incoming updates to break linkability between clients and messages. 
		After shuffling, they perform integrity checks and run a privacy-preserving aggregation over the anonymized, noised updates. 
		The global model is then updated as
		\begin{equation}
			\label{eq:update_aggregate}
			\tilde{W} = w_t - \eta \cdot \textsc{Aggregate}\big(\{g_i^{\mathrm{bool}}\}\big),
		\end{equation}
		where $\textsc{Aggregate}(\cdot)$ denotes an aggregation mechanism that operates under the above privacy constraints. 
		Communication between clients and servers is protected using secure multi-party computation primitives, so that no single server can reconstruct any individual update or identify its origin.
		
			\begin{figure*}[t]
			\centering
			\includegraphics[width=1.05\textwidth]{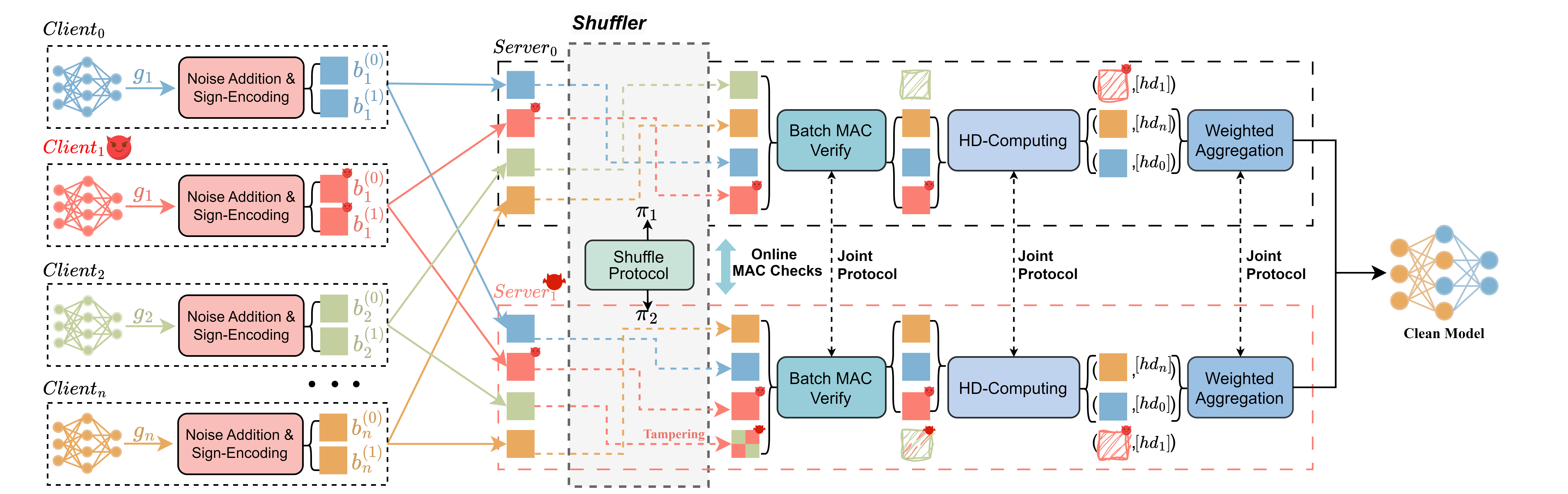}
			\caption{
				\textbf{Overview of the RAIN framework.}
				Each client $C_i$ locally adds DP noise and sign-encodes its update into a binary vector, then sends secret shares to two non-colluding servers $S_0$ and $S_1$.
				The servers execute a verifiable shuffle to anonymize message origins, perform MAC-based integrity checks to reject dropped or tampered messages, and compute Hamming-distance consistency over the shuffled sign vectors.
				RAIN then aggregates the verified updates in the arithmetic domain and reconstructs the global model $\tilde{\mathbf{W}}$.}
			\label{fig_2}
		\end{figure*}
		
		Certain robust aggregation mechanisms rely on a public reference signal to stabilize aggregation under anonymized and noisy updates.
		We assume access to a small, audited public calibration set $\mathcal{R}$ (size $\approx 10^2$), drawn to match the task distribution and containing no client data.
		$\mathcal{R}$ is used only to compute a stable reference direction and never participates in training or privacy accounting.
		The set can be curated by a jointly trusted authority or sourced from public benchmarks; its small size makes auditing feasible. We treat the resulting reference direction as public and observable to adversaries. Even under this assumption, an adversary can only affect aggregation by flipping
		a bounded fraction of sign coordinates without exceeding the Hamming-distance threshold.
		This assumption is standard and does not weaken client-side Shuffle-DP guarantees.

		\subsection{Threat Model and Assumptions}
		We consider an adversary with both internal and external capabilities, targeting robustness, privacy, and integrity of the federated learning process. 
		All parties are assumed to be computationally bounded and cannot break standard cryptographic primitives such as MPC, message authentication codes, or encryption.
		
		\textbf{Client-side adversary.}
		The adversary may compromise an arbitrary, potentially majority, subset of clients and coordinate them to mount Byzantine attacks that bias the global model or degrade its accuracy. 
		Compromised clients can arbitrarily manipulate local gradients before randomization and may attempt to infer information about benign clients from aggregate updates or model dynamics.
		%RAIN’s robust aggregation design aims to bound such adversarial influence even when most participating clients are malicious.

		\textbf{Server-side adversary.}
		%Among the two aggregation servers $(S_0, S_1)$, we adopt a non-colluding two-server model: at most one server may be malicious or semi-honest.  
		%A malicious server may attempt to infer client participation, reorder or modify uploaded updates, or selectively drop messages during shuffling and aggregation.  
		%A semi-honest server follows the protocol but tries to learn additional information from intermediate states.  
		%This assumption is standard in privacy-preserving analytics (e.g., SecureFL~\cite{r7}, ELSA~\cite{r6}, FLOD~\cite{r9}) and is practical in cross-organizational deployments—such as a cloud provider cooperating with an independent auditor or two departments governed by distinct administrative domains.  
		%If both servers collude, privacy degrades to the local model, where protection falls back to LDP-level guarantees.
		We consider adversaries that may compromise the server-side infrastructure
		responsible for shuffling and aggregation.
		To model this setting, we adopt a non-colluding two-server architecture
		with two aggregation servers $(S_0, S_1)$,
		where at most one server may be malicious or semi-honest.
		A compromised server may deviate arbitrarily from the protocol execution,
		while the other remains honest.
		This non-collusion assumption is standard in privacy-preserving analytics
		(e.g., SecureFL~\cite{r7}, ELSA~\cite{r6}, FLOD~\cite{r9})
		and reflects practical cross-organizational deployments,
		such as a cloud provider cooperating with an independent auditor
		or two entities governed by distinct administrative domains.
		If both servers collude, privacy degrades to the local model,
		where protection reduces to standard LDP guarantees.

		%\textbf{Integrity threats.}
		%During shuffling and aggregation, an active adversary may attempt to tamper with, replay, or selectively drop client messages to bias the aggregated result or disrupt system operation.  
		%To defend against such threats, RAIN integrates batch and online MAC verification, enabling real-time detection of message modification or loss while preserving anonymity.

		\section{The Design of RAIN}
		
		\subsection{Overview}
		
		RAIN aims to enable secure and robust federated learning under the Shuffle-DP model,
		where client-level privacy, Byzantine robustness, and verifiable aggregation integrity
		must be achieved simultaneously.
		Unlike prior Shuffle-DP frameworks that address privacy and robustness in isolation,
		RAIN unifies these objectives through a two-layer design:
		the RAIN mechanism, a sign-space robust aggregation rule,
		and the RAIN framework, its malicious-secure realization via
		a secure secret-shared shuffle and aggregation protocol with integrated integrity verification.
		
		The key challenge for achieving robustness under Shuffle-DP
		is that random permutation, while amplifying privacy,
		simultaneously destroys the geometric consistency and temporal identity
		that conventional robust aggregation depends on.
		As a result, geometric-based and reputation-based defenses collapse under anonymization.
		To restore robustness without breaking anonymity,
		we propose Robust Aggregation in Noise,
		which aggregates in the sign domain instead of the Euclidean space.
		By exploiting the statistical separability between benign and adversarial sign vectors,
		RAIN replaces geometric similarity with Hamming-distance–based trust scoring,
		thus bounding the influence of malicious clients even when a majority of participants are adversarial.
		This RAIN mechanism provides the theoretical foundation of our framework.

		While sign-space aggregation enables robustness under noise and anonymity,
		implementing it securely under Shuffle-DP raises additional challenges in privacy.
		A naive approach would be to adopt verifiable mixnet-style shuffling or generic secure aggregation protocols.
		However, these methods rely on public-key operations and zero-knowledge proofs,
		resulting in prohibitive computational and communication cost
		when processing high-dimensional model updates in federated learning.
		Moreover, mixnet-based verifiability assumes a trusted auditor or non-overlapping batches,
		which conflicts with the continuous aggregation requirement of FL. To overcome these limitations, the RAIN framework
		builds upon lightweight secret sharing and modular arithmetic
		to realize a two-server secret-shared shuffle and aggregation protocol.
		Two non-colluding servers jointly perform all shuffle and aggregation steps
		directly over additive shares of sign-encoded updates,
		without learning either the client identities or the shuffle permutation.
		This preserves Shuffle-DP privacy amplification
		while ensuring computational correctness even if one server is malicious or semi-honest.

		Beyond robustness and privacy, RAIN further guarantees verifiable integrity
		against selective-failure and tampering attacks from a malicious server.
		Prior MPC shuffle frameworks achieve verifiability
		through heavy cryptographic proofs or offline auditing,
		which are infeasible for real-time federated aggregation.
		We instead design a streaming integrity verification layer
		that attaches lightweight message authentication codes
		to each encrypted update and validates them online throughout the shuffle pipeline.
		This streaming design ensures that every client contribution is either correctly processed or detected as invalid in real time,
		without interrupting the MPC execution or violating DP semantics.
		Formally, we prove in Section~\ref{sec:integrity} that our protocol
		achieves malicious-security against any single colluding server,
		ensuring that no message can be dropped, modified, or replayed without detection.
		\subsection{Robust Aggregation in Noise}
		\label{sec:rain}
		
		Traditional robust aggregation schemes rely on the geometric or temporal consistency of client updates to identify and suppress malicious gradients.
		However, in the Shuffle-DP setting, the injected noise distorts geometric relations, and the shuffling process removes any persistent identity linkage.
		As a result, distance- or reputation-based aggregation rules fail to operate effectively.
		To overcome this limitation, we design RAIN,
		a geometry-free robust aggregation mechanism that operates directly in the sign space.
		
		Although noise perturbation and anonymization obscure gradient magnitudes and directions, 
		the element-wise sign pattern of updates remains statistically distinguishable between benign and adversarial clients.
		Benign updates tend to produce sign vectors aligned with the true global direction,
		whereas poisoned or random updates exhibit significantly higher sign-flip ratios.
		Leveraging this property, RAIN aggregates in the sign domain using Hamming-distance–based trust scoring and ReLU-adaptive weighting,
		thereby bounding the influence of outliers without requiring client identities or raw geometric information.

		\textbf{Local Randomization.}
		To ensure client-level privacy compatible with Shuffle-DP, we adopt the \emph{Sign-Gaussian mechanism}~\cite{r18}, 
		originally proposed in DP-RSA to jointly achieve differential privacy and Byzantine robustness. 
		Each client applies the mechanism to its clipped gradient:
		\begin{equation}
			\tilde{b}_i = \mathrm{sign}\!\bigl(\mathrm{clip}(g_i) + \eta_i\bigr), 
			\qquad \eta_i \sim \mathcal{N}(0,\sigma^2 I_d),
		\end{equation}
		which satisfies $(\varepsilon,0)$-DP when 
		$\sigma > \max_i\!\{\tfrac{2|g_i|}{3},\, \tfrac{4\Delta_g}{\varepsilon}\}$. 
		This perturbation injects randomness primarily near the sign boundary, 
		preserving most directional information while maintaining strict privacy under Shuffle-DP anonymization.
		
		Let $g_i$ denote a scalar component of the clipped gradient and $\eta_i \sim \mathcal{N}(0,\sigma^2)$ the added noise. 
		The transmitted sign is $\tilde{b}_i = \mathrm{sign}(g_i + \eta_i)$. 
		The probability that the noise flips the sign is
		\begin{equation}
			P_{\mathrm{flip}} = \Pr[\tilde{b}_i \neq \mathrm{sign}(g_i)] 
			= \Phi\!\left(-\frac{|g_i|}{\sigma}\right),
		\end{equation}
		where $\Phi(\cdot)$ is the standard Gaussian CDF. 
		Hence, larger $|g_i|$ or smaller $\sigma$ yields a lower flip probability. 
		The expected transmitted sign satisfies
		\begin{equation}
			\mathbb{E}[\tilde{b}_i \mid g_i]
			= (2\Phi(|g_i|/\sigma)-1)\,\mathrm{sign}(g_i),
		\end{equation}
		indicating that the noisy sign is an attenuated version of the true sign with factor 
		$\kappa_i = 2\Phi(|g_i|/\sigma)-1 \in [0,1]$. 
		Following the privacy analysis in~\cite{r18}, 
		the mechanism satisfies $(\varepsilon,0)$-DP when
		\begin{equation}
			\sigma > \max\!\left\{\frac{2|g_i|}{3},\ \frac{4\Delta_g}{\varepsilon}\right\}.
		\end{equation}
		This reveals the intrinsic privacy–robustness trade-off: 
		larger $\sigma$ enhances privacy but increases sign-flip probability, 
		while smaller $\sigma$ improves robustness at the cost of weaker privacy. 
		As training converges and $\|g_i\|$ decreases, the effective noise impact naturally diminishes, 
		maintaining a balanced privacy–utility trade-off.
		
		\begin{algorithm}[h]
			\small
			\caption{RAIN: Robust Aggregation in Noise}
			\label{alg:rain}
			\KwIn{Randomized sign vectors $\{\tilde{b}_i\}_{i=1}^N$, reference direction $r$ (from root dataset), MAD coefficient $\lambda_{\mathrm{mad}}$}
			\KwOut{Aggregated update $g_{\text{agg}}$}
			\For{each client $i$}{
				Compute $d_i = \tfrac{1}{2d}(d - \tilde{b}_i\!\cdot r)$\;
			}
			Estimate $\tau = \mathrm{median}(d) + c_{\mathrm{mad}}\,\cdot\lambda_{\mathrm{mad}}\cdot \mathrm{MAD}(d)$\;
			Assign $\alpha_i = \tfrac{\max(0,\,\tau-d_i)}{\sum_j \max(0,\,\tau-d_j)}$\;
			Compute $g_{\text{agg}} = \sum_i \alpha_i \tilde{b}_i$\;
			\Return $g_{\text{agg}}$\;
		\end{algorithm}

		\textbf{Aggregation Rule.}
		Let $\tilde{b}_i \in \{-1,+1\}^d$ denote the randomized sign vector from client~$i$.
		To anchor trust in an anonymized environment, 
		RAIN adopts a reference direction $r$ derived from a small trusted root dataset,
		which provides a stable and verifiable global orientation.
		The server (or secret-shared aggregator) computes the Hamming distance between each $\tilde{b}_i$ and $r$:
		\begin{equation}
			d_i = \frac{1}{2d}\bigl(d - \tilde{b}_i \cdot r\bigr).
		\end{equation}

		A robust threshold $\tau$ is estimated via the median and MAD:
		\begin{equation}
			\tau = \mathrm{median}(d) + c_{\mathrm{mad}}\,\cdot\lambda_{\mathrm{mad}}\cdot \mathrm{MAD}(d).
		\end{equation}
		Here, $c_{\mathrm{mad}} = 1.4826$ is the normal-consistency constant
		that scales the MAD to an unbiased estimate of the standard deviation
		under Gaussian assumptions.
		This normalization ensures that $\lambda_{\mathrm{mad}}$
		corresponds to the number of robust standard deviations from the median.

		Each client receives a ReLU-based trust weight:
		\begin{equation}
			\alpha_i = \frac{\max(0,\, \tau - d_i)}{\sum_j \max(0,\, \tau - d_j)},
		\end{equation}
		and the aggregated update is computed as
		\begin{equation}
			g_{\text{agg}} = \sum_{i=1}^{N} \alpha_i \tilde{b}_i, 
			\qquad
			\tilde{W} = w_t - \eta\, g_{\text{agg}}.
		\end{equation}

		The ReLU weighting acts as a smooth relaxation of binary trust filtering, 
		assigning higher weights to clients whose sign vectors align with the reference direction 
		and suppressing those deviating beyond the robust threshold. 
		Unlike FLTrust, FLARE, or Krum, RAIN operates purely in the sign domain, 
		requiring neither gradient magnitudes nor persistent client identities. 
		This makes RAIN inherently compatible with Shuffle-DP anonymization. 
		Moreover, since Hamming distance and ReLU can be expressed using addition and comparison primitives, 
		the entire aggregation can be securely executed within the MPC-encrypted domain without revealing individual updates. 
		The trusted root direction provides a consistent anchor, 
		ensuring stability even under heavy anonymization and noise perturbation.

		\textbf{Theoretical Guarantee.}
		Following~\cite{r18}, we derive the following bound on RAIN’s privacy–robustness trade-off.
		
		\begin{lemma}[Noise–Robustness Trade-off of RAIN]
			\label{lem:rain-tradeoff}
			Let $g_i$ be a scalar component of a clipped gradient with sensitivity $\Delta_g$, 
			and $\tilde{b}_i = \mathrm{sign}(g_i + \eta_i)$ its privatized sign under the Sign-Gaussian mechanism, 
			where $\eta_i \sim \mathcal{N}(0,\sigma^2)$.
			If $\sigma > \tfrac{4\Delta_g}{\varepsilon}$, the mechanism preserves $(\varepsilon,0)$-DP. 
			Moreover, when $\varepsilon \ge 2$, the expected sign-flip probability 
			$P_{\mathrm{flip}} = \Phi(-|g_i|/\sigma)$ 
			remains below $0.15$ for typical gradients, 
			and the aggregated direction in RAIN satisfies
			\[
			\mathbb{E}[\cos\theta] \ge 1 - 2P_{\mathrm{flip}} > 0.7.
			\]
			Hence, RAIN achieves both strict differential privacy and stable robust aggregation.This lemma analyzes the effect of Gaussian noise on sign stability
			and does not provide an additional privacy guarantee.
			
		\end{lemma}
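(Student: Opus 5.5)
The lemma has three parts, and I will prove them in order: the privacy claim, the flip-probability bound, and the cosine bound.

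\textbf{Privacy part.} This is imported directly from the Sign-Gaussian analysis of~\cite{r18}, as recalled above. The condition $\sigma>\max\{2|g_i|/3,\,4\Delta_g/\varepsilon\}$ gives $(\varepsilon,0)$-DP. To justify stating only the $4\Delta_g/\varepsilon$ term, I will restrict to the regime where it dominates, i.e.\ $|g_i|\le\Delta_g$ and $\varepsilon\le 6$. In that regime $2|g_i|/3\le 2\Delta_g/3\le 4\Delta_g/\varepsilon$. As a sanity check one can bound the log-likelihood ratio $\log\frac{\Phi(g/\sigma)}{\Phi(g'/\sigma)}$ for $|g-g'|\le\Delta_g$ using log-concavity of $\Phi$ on the relevant range.

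\textbf{Flip probability.} As computed above, $P_{\mathrm{flip}}=\Phi(-|g_i|/\sigma)$. I then make ``typical gradients'' precise as a signal-to-noise assumption $|g_i|\ge c\,\sigma$, with $c=\Phi^{-1}(0.85)\approx 1.04$. Under this assumption $P_{\mathrm{flip}}\le 0.15$ follows immediately from monotonicity of $\Phi$. To connect this with $\varepsilon\ge 2$, I take $\sigma$ at its minimal admissible value $4\Delta_g/\varepsilon\le 2\Delta_g$. The assumption then reduces to $|g_i|\gtrsim 2.08\,\Delta_g/\varepsilon$, i.e.\ coordinates whose magnitude is a constant fraction of the sensitivity.

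\textbf{Cosine bound.} I compare the aggregated sign vector with $s=\mathrm{sign}(g)$ coordinatewise, using the attenuation factor from the preceding display:
\[
\mathbb{E}[\tilde b_{i,k}s_k]=\kappa_k=1-2\Phi(-|g_k|/\sigma)\ge 1-2P_{\mathrm{flip}}.
\]
For a single $\pm1$ vector, $\cos\theta=\frac1d\sum_k\tilde b_{i,k}s_k$, so $\mathbb{E}[\cos\theta]\ge 1-2P_{\mathrm{flip}}\ge 0.7$ by linearity. For the RAIN aggregate $g_{\mathrm{agg}}=\sum_i\alpha_i\tilde b_i$, the weights are nonnegative and sum to one. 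I will condition on the weights (treating them as approximately independent of the benign noise, or as depending only on $d_i$) and use convexity: $\langle g_{\mathrm{agg}},s\rangle$ is a convex combination of the per-client inner products, and $\|g_{\mathrm{agg}}\|_2\le\sqrt d$ by the triangle inequality. Hence $\cos\theta\ge\frac1d\sum_i\alpha_i\langle\tilde b_i,s\rangle$, and taking expectations gives the same bound. The strict inequality $>0.7$ uses $P_{\mathrm{flip}}<0.15$ strictly.

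\textbf{Main obstacle.} The hard step is the dependence between $\alpha_i$ and $\tilde b_i$, together with the role of malicious clients. My plan is to argue that the ReLU weights assign zero mass to clients with $d_i\ge\tau$. I would then apply the bound only to the benign, weighted contributions, or state the claim for the benign-only aggregate. Any residual correlation would be handled by noting that $\alpha_i$ is decreasing in $d_i$. This makes the weighting upweight vectors closer to $r$, and hence, assuming $r$ is aligned with $s$, closer to $s$, so conditioning can only increase the expected alignment.
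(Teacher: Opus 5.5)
Your outline follows the paper's route: import the Sign-Gaussian condition, turn ``typical gradients'' into a magnitude-to-noise assumption, then count flipped coordinates. Two of your choices are more careful than the paper's. Evaluating at the smallest admissible $\sigma$ is the correct monotonicity direction; the paper's step $\Phi(-|g_i|/\sigma)\le\Phi(-|g_i|\varepsilon/(4\Delta_g))$ runs the wrong way when $\sigma>4\Delta_g/\varepsilon$. Calibrating $c=\Phi^{-1}(0.85)$ also avoids the paper's slip that $\Phi(-1)\approx 0.1587$ does not give $P_{\mathrm{flip}}<0.15$ at $\varepsilon=2$.

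The genuine gap is that your two regime restrictions contradict each other. To discard the $2|g_i|/3$ term you assume $|g_i|\le\Delta_g$ and $\varepsilon\le 6$. For the flip bound you need $|g_i|\ge c\sigma>4c\,\Delta_g/\varepsilon\approx 4.15\,\Delta_g/\varepsilon$. Your $2.08\,\Delta_g/\varepsilon$ drops a factor of $2$: $2.08\,\Delta_g$ is the value at $\varepsilon=2$, which is more than twice the sensitivity, not a fraction of it. Both conditions can hold only if $\varepsilon>4c\approx 4.15$. So your argument covers at most $4.15<\varepsilon\le 6$ and is vacuous at the headline value $\varepsilon=2$. Sharper constants will not fix this: $|g_i|\le\Delta_g$ and $\sigma>4\Delta_g/\varepsilon$ force $P_{\mathrm{flip}}>\Phi(-\varepsilon/4)$, which is $\Phi(-1/2)\approx 0.31$ at $\varepsilon=2$.

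To repair it, drop $|g_i|\le\Delta_g$ and read $\Delta_g$ as a sensitivity that is small relative to the coordinate, as the paper implicitly does with its assumption $|g_i|\ge 2\Delta_g$. Keep the full condition $\sigma>\max\{2|g_i|/3,\,4\Delta_g/\varepsilon\}$. Its first term only says $|g_i|/\sigma<3/2$, which is compatible with your assumption. So define typical coordinates by $c<|g_i|/\sigma<3/2$, strict on the left since $>0.7$ needs $P_{\mathrm{flip}}<0.15$ strictly. Then $\varepsilon\ge 2$ enters only through the feasibility requirement $|g_i|>4c\,\Delta_g/\varepsilon$.

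Second, in the cosine step, $\|g_{\mathrm{agg}}\|_2\le\sqrt d$ gives $\cos\theta\ge X:=\langle g_{\mathrm{agg}},s\rangle/d$ only on the event $X\ge 0$. When $X<0$, dividing by the smaller norm gives $\cos\theta\le X$, so you cannot take expectations of your pointwise inequality. What holds everywhere is $\cos\theta\ge X-\mathbf{1}\{X<0\}$, so you must also bound $\Pr[X<0]$.

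In the benign case with $r=s$, a rearrangement argument handles both this and your weight-dependence obstacle, without any conditioning:
\begin{itemize}
\item All $\alpha_i$ share the same $\tau$ and normalizer, and $d_i=(1-y_i)/2$ with $y_i=\langle\tilde b_i,s\rangle/d$. So $\alpha_i$ is a nondecreasing function of $y_i$.
\item Chebyshev's sum inequality then gives $X=\sum_i\alpha_i y_i\ge\bar y:=\frac1N\sum_i y_i$ pointwise, whatever the dependence between weights and noise.
\item $\bar y$ averages $Nd$ independent $\pm1$ variables with mean at least $\mu=1-2P_{\mathrm{flip}}$. Hoeffding gives $\Pr[X<0]\le\Pr[\bar y<0]\le e^{-Nd\mu^2/2}$.
\item Hence $\mathbb{E}[\cos\theta]\ge\mathbb{E}[\bar y]-\Pr[\bar y<0]\ge\mu-e^{-Nd\mu^2/2}$, which is the lemma's bound up to an exponentially small term.
\end{itemize}
For $r\neq s$ you lose an additive $4\,\mathrm{hd}(r,s)/d$. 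With malicious clients, $\bar y$ contains adversarial terms that the median/MAD threshold need not exclude when they are a majority. So only the benign-only statement you suggest is provable. The paper's proof avoids all of this by arguing as if the aggregate were a single noisy sign vector. In that case $\cos\theta=X$ exactly, and your single-vector computation is already complete.
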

		
		\begin{proof}
			From~\cite{r18}, the Sign-Gaussian mechanism guarantees $(\varepsilon,0)$-DP if $\sigma > 4\Delta_g/\varepsilon$.
			The sign-flip probability satisfies $P_{\mathrm{flip}} = \Phi(-|g_i|/\sigma) \le \Phi(-|g_i|\varepsilon/(4\Delta_g))$.
			For normalized gradients $|g_i|\ge 2\Delta_g$, we have $P_{\mathrm{flip}} \le \Phi(-\varepsilon/2)$.
			Numerically, $\Phi(-1)\!\approx\!0.1587$ and $\Phi(-1.5)\!\approx\!0.0668$, 
			so when $\varepsilon\!\ge\!2$, $P_{\mathrm{flip}}\!<\!0.15$.
			Let $p_f=P_{\mathrm{flip}}$.
			Since each flipped sign reverses one coordinate,
			the expected cosine similarity between RAIN’s aggregated direction and the true gradient satisfies $\mathbb{E}[\cos\theta] \ge 1-2p_f$.
			For $p_f\le0.15$, $\mathbb{E}[\cos\theta]\ge0.7$, ensuring convergence consistency.
			Thus, for $\varepsilon\ge2$, RAIN attains strict $(\varepsilon,0)$-DP 
			while maintaining robust aggregation with negligible degradation.
		\end{proof}

		\subsection{Secret-shared Shuffle and Aggregation}
		\label{sec:mpcshuffle}
		
		While RAIN enables robust aggregation in the sign domain, it still requires a secure mechanism to shuffle and aggregate anonymized updates without revealing their correspondence to individual clients.  
		In the Shuffle-DP paradigm, the shuffle operation itself provides privacy amplification, but realizing it securely in a multi-server environment is non-trivial: a curious or malicious server may attempt to infer the permutation, drop or modify messages, or bias the aggregated result.  
		To address these issues, RAIN introduces a lightweight two-server secret-shared shuffle and aggregation protocol, which performs both anonymization and robust aggregation entirely within the encrypted domain.
		
		\begin{algorithm}[t]
			\small
			\caption{MPC-based Encrypted Shuffle and Aggregation in RAIN}
			\label{alg:rain-ssa}
			\DontPrintSemicolon
			\SetKwInput{KwIn}{Input}\SetKwInput{KwOut}{Output}\SetKwInput{KwPar}{Parameters}
			\KwIn{Clients' additive shares $\{(b_i^{(0)},b_i^{(1)})\}_{i=1}^K$, where $b_i\!\in\!\{-1,+1\}^d$ and $b_i^{(0)}\!+\!b_i^{(1)}\!=\!b_i \bmod p$; \\
				\hspace{2.2em} Trusted direction share $[s_{\mathrm{trust}}]$;\\ \hspace{2.8em}robust threshold share $[\tau]$.}
			\KwOut{Aggregated next direction $[s_{\mathrm{next}}]\in\{-1,+1\}^d$.}
			\KwPar{Two non-colluding servers $S_0,S_1$; modulus $p$; PRG $G$; secure \textsf{Add}, \textsf{Mul}, \textsf{Cmp}.}
			
			\textbf{Offline (preprocessing)}\;
			\Indp
			\textbf{(O1)} $S_0$ and $S_1$ agree on a shared seed $\mathtt{seed}$; expand via PRG: 
			$(\pi_0,\pi_1,\{r_0[i]\}_{i=1}^K,\{r_1[i]\}_{i=1}^K)\leftarrow G(\mathtt{seed})$, with $r_0[i]+r_1[i]=0 \bmod p$.\;
			\textbf{(O2)} Fix MPC primitives for vector \textsf{Mul} (scalar$\times$vector), \textsf{Cmp} (comparison), and batched reductions.\;
			\Indm
			
			\textbf{Phase~I: Secret-shared Shuffle}\;
			\Indp
			\For{$i\leftarrow 1$ \KwTo $K$}{
				$S_0$: $b_i'^{(1)} \leftarrow \pi_0(b_i^{(0)}) + r_0[i]$
				
				$S_1$: $b_i'^{(2)} \leftarrow \pi_1(b_i^{(1)}) + r_1[i]$
			}
			\Indm
			
			\textbf{Phase~II: Secret-shared Robust Aggregation}\;
			\Indp
			\textbf{(A1) Secure Hamming distances}\;
			\For{$i\leftarrow 1$ \KwTo $K$}{
				\For{$j\leftarrow 1$ \KwTo $d$}{
					\lIf(\tcp*[f]{$x\oplus y = x+y-2xy$ over $\mathbb{Z}_p$}){$\ $}{$[\mathsf{xor}_{i,j}] \leftarrow [b'_{i,j}] + [s_{\mathrm{trust},j}] - 2\cdot [b'_{i,j}]\cdot [s_{\mathrm{trust},j}]$}
				}
				$[hd_i] \leftarrow \sum_{j=1}^d [\mathsf{xor}_{i,j}]$
			}
			\textbf{(A2) Adaptive ReLU weights}\;
			\For{$i\leftarrow 1$ \KwTo $K$}{
				$[m_i] \leftarrow \textsf{Cmp}([\tau], [hd_i])$ 
				$[w_i] \leftarrow ([\tau]-[hd_i])\cdot [m_i]$ 
			}
			\textbf{(A3) Weighted aggregation in shares}\;
			$[z] \leftarrow \mathbf{0}\in \mathbb{Z}_p^d$\;
			\For{$i\leftarrow 1$ \KwTo $K$}{
				$[z] \leftarrow [z] + [w_i]\cdot [b'_i]$ 
			}
			\textbf{(A4) Sign extraction (coordinate-wise)}\;
			\For{$j\leftarrow 1$ \KwTo $d$}{
				$[c_j]\leftarrow \textsf{Cmp}([z_j], 0)$ 
				$[s_{\mathrm{next},j}] \leftarrow 2\cdot[c_j]-1$
			}
			\Indm
			
			\Return $[s_{\mathrm{next}}]$.
			
		\end{algorithm}
		
		\textbf{Secret Sharing Scheme.}  
		RAIN employs a two-party additive secret-sharing scheme between two non-colluding servers, $S_0$ and $S_1$.  
		Each client $C_i$ splits its randomized sign vector $b_i \in \{-1,+1\}^d$ into two additive shares $(b_i^{(0)}, b_i^{(1)})$ satisfying
		\begin{equation}
			b_i^{(0)} + b_i^{(1)} = b_i \pmod p,
		\end{equation}
		and sends $b_i^{(t)}$ to server $S_t$ for $t\in\{0,1\}$.  
		Neither server can reconstruct any $b_i$ individually.  
		This setting eliminates the need for a central coordinator or pre-shared permutation (as in prior three-server schemes~\cite{r33}), while maintaining the same privacy amplification guarantee as Shuffle-DP.
		
		\textbf{Secret-shared Shuffle Protocol.}
		We instantiate the encrypted shuffle in RAIN using a lightweight two-party secret-shared protocol built on additive secret sharing.
		Unlike the three-server shuffle adopted by Camel~\cite{r12}, our design requires only two non-colluding servers, $S_0$ and $S_1$, 
		which collaboratively perform a secret permutation and re-masking of client updates in the encrypted domain.

		Each client $i$ provides additive shares $(b_i^{(0)}, b_i^{(1)})$ such that 
		$b_i^{(0)} + b_i^{(1)} = b_i \bmod p$.
		To securely shuffle the shared updates $\{[b_i]\}$, the two servers jointly derive correlated randomness from a shared PRG seed. 
		Specifically, $S_0$ and $S_1$ locally expand the seed to generate:
		\begin{equation}
			\pi_0,\pi_1 : [N] \!\to\! [N], \qquad 
			r_0,r_1 \in \mathbb{Z}_p^{N\times d},
		\end{equation}
		where $\pi_0,\pi_1$ are random local permutations and $r_0,r_1$ are random re-masking vectors satisfying $r_0 + r_1 = 0 \pmod p$.
		The secret-shared shuffle proceeds in two phases as follows.
		
		\textbf{Offline Phase.}
		The servers synchronize the random seeds used for permutation and re-masking via a common setup channel, 
		ensuring that the generated correlations $(\pi_0,\pi_1,r_0,r_1)$ are consistent but remain unknown to any third party.
		No client or external coordinator is required.
		
		\textbf{Online Phase.}
		During the shuffle, each server permutes and re-masks its local shares independently:
		\begin{equation}
			\begin{aligned}
				S_0: &\quad b'_i = \pi_0(b_i^{(0)}) + r_0[i],\\
				S_1: &\quad b'_i = \pi_1(b_i^{(1)}) + r_1[i].
			\end{aligned}
		\end{equation}
		Because $r_0 + r_1 = 0$, the permuted shares satisfy:
		\begin{equation}
			b_{i}'^{(0)}  + b_{i}'^{(1)}  = b_{\pi(i)} \bmod p,
		\end{equation}
		where $\pi = \pi_1 \circ \pi_0$ is the effective joint permutation. 
		Thus, the output $[b'_i]$ corresponds to a randomly permuted and re-masked version of the input $[b_i]$,
		achieving both anonymity and correctness:
		\begin{equation}
			[b'_i] \leftarrow \textsc{Secret-Shared Shuffle}_\pi([b_i]).
		\end{equation}
		Neither server learns the global permutation $\pi$ nor any link between a client and its shuffled update, 
		yet both can continue performing secure computation over the anonymized shares.
		This shuffle step provides the same privacy amplification as the Shuffle-DP model while preserving full compatibility with RAIN’s robust aggregation pipeline.

		\textbf{Secret-shared Robust Aggregation.}
		After the encrypted shuffle, the two servers jointly perform RAIN’s sign-space robust aggregation entirely over secret shares.
		For each anonymized client update $[b'_i]$, the servers first compute the bitwise Hamming distance to the trusted reference direction $[s_{\text{trust}}]$
		and then obtain an adaptive weight using a secure ReLU gate:
		\begin{equation}
			\label{eq:hammingrelu}
			\begin{aligned}
				[hd_i] &= \sum_{j=1}^{d} \big([b'_{i,j}] + [s_{\text{trust},j}] - 2\,[b'_{i,j}]\,[s_{\text{trust},j}]\big),\\
				[w_i]  &= \big([\tau]-[hd_i]\big)\cdot \mathbf{1}_{[\tau]\ge [hd_i]}.
			\end{aligned}
		\end{equation}
		These two steps are realized via the MPC-Hamming and MPC-ReLU subroutines, implemented using only additions,
		multiplications, and secure comparisons over $\mathbb{Z}_p$.
		
		Each weighted update is then aggregated in the encrypted domain through secure scalar–vector multiplication followed by coordinate-wise summation:
		\begin{equation}
			[z] = \sum_{i=1}^{K} [w_i]\,[b'_i].
		\end{equation}
		Finally, the servers execute a coordinate-wise MPC-Sign to derive the next trusted direction.
		All arithmetic and comparison primitives are linear in both the number of clients $K$ and the model dimension $d$,
		so the entire aggregation achieves $O(Kd)$ computational and communication complexity.
		
		Compared with mixnet-style or three-server verifiable shuffles,
		RAIN’s two-server protocol performs all robust aggregation steps, including Hamming distance computation, ReLU weighting, and sign extraction, directly over additive secret shares without relying on any public-key or zero-knowledge operations.
		This greatly reduces both computation and bandwidth.
		All randomness comes from symmetric-key PRG expansion, and every MPC primitive involves only linear rounds of interaction.
		Consequently, the protocol preserves the privacy amplification effect of Shuffle-DP,
		while securely realizing geometry-free robust aggregation inside the encrypted domain.
		
		By tightly integrating encrypted shuffling with robust sign-space aggregation,
		RAIN provides a practical and scalable realization of end-to-end privacy, correctness, and robustness
		within the Shuffle-DP federated learning paradigm using only two servers.

		\subsection{Streaming Integrity Verification}
		\label{sec:integrity}
		
		Existing approaches for malicious-secure shuffling, such as verifiable mixnets~\cite{r12} 
		and three-party ZK-based protocols, achieve correctness through heavy public-key proofs or batch audit phases.
		While these designs guarantee integrity, they incur prohibitive costs for high-dimensional FL updates, 
		requiring multiple encryption rounds, NIZK proofs, or offline audits for each permutation.
		Moreover, post-shuffle verification enables selective-failure attacks, 
		where a malicious server perturbs a small subset of shuffled entries and infers sensitive information 
		from which verification checks fail. 
		Hence, existing solutions cannot ensure real-time malicious security in practical federated settings.
		
		To overcome these limitations, RAIN introduces a streaming integrity verification layer that achieves continuous, online detection of integrity violations within the shuffle pipeline, 
		without revealing any permutation or message content.
		
		\textbf{Streaming MAC Construction.}
		Our protocol adopts a Carter Wegman symmetric message authentication code~\cite{r47}, 
		modified to support one-time authentication over secret-shared data streams.  
		Let $[r] = ([r]^{(0)}, [r]^{(1)})$ denote an additive share of a masked vector in $\mathbb{Z}_p^d$.  
		For a PRG-derived MAC key $k_t \in \mathbb{Z}_p^d$ generated from a time-dependent seed $\textit{seed}_t$, 
		the streaming MAC tag is computed as:
		\begin{equation}
			\label{eq:mac}
			t = \sum_{j=1}^{d} k_t[j] \cdot r[j].
		\end{equation}
		Both servers expand the shared seed using a synchronized PRG $G$, yielding additive key shares 
		$k_t^{(0)} + k_t^{(1)} = k_t \bmod p$, so that neither server learns the complete key.  
		At each shuffle iteration $t$, both servers compute and exchange the tag of each re-masked share:
		\begin{equation}
			\mathrm{tag}_i^{(j)} = \mathrm{MAC}_{k_t^{(j)}}(b_i'^{(j)}),
		\end{equation}
		and verify upon receipt:
		\begin{equation}
			\mathrm{Verify}\big(\mathrm{MAC}_{k_t^{(j)}}(b_i'^{(j)}),\,\mathrm{tag}_i^{(j)}\big) = 1.
		\end{equation}
		If verification fails, the corrupted message is immediately discarded and the protocol halts.
		Since $\textit{seed}_t$ evolves with the shuffle round index, no MAC key is ever reused, 
		and the verification process leaks no information about client identities or the global permutation $\pi$.
		
		\textbf{Protocol Integration.}
		The streaming MAC mechanism is embedded directly into the MPC shuffle pipeline.
		When $S_0$ and $S_1$ perform the re-masking operation 
		$b_i'^{(0)} = \pi_0(b_i^{(0)}) + r_0[i]$ and $b_i'^{(1)} = \pi_1(b_i^{(1)}) + r_1[i]$, 
		each re-masked share is immediately accompanied by its MAC tag.  
		Upon receiving a batch of shares, each server verifies the tags on-the-fly before proceeding to the next step of robust aggregation.
		This pipeline enables real-time detection of tampering or message loss within the same communication round, 
		eliminating any offline MAC check.
		The verified shares are then directly input to the Hamming–ReLU aggregation stage of Algorithm~\ref{alg:rain-ssa}.
		
		In Camel, post-shuffle MAC verification allows a malicious server to mount selective-failure attacks 
		by injecting small perturbations into shuffled entries and observing which checks fail.  
		RAIN prevents such leakage by verifying every re-masked share at streaming granularity.
		A corrupted share cannot reach the aggregation stage and thus reveals nothing about its position.  
		Formally, if one server attempts to bias a single share $b_i'^{(j)}$ by adding $\delta$, 
		the verification equation fails with probability $1 - 1/p$, 
		and the deviation is detected before any reconstruction.
		Therefore, the success probability of a selective-failure attack is negligible 
		and independent of the number of clients.
		
		All streaming MAC operations involve only field additions and multiplications over $\mathbb{Z}_p$, 
		introducing a constant-size tag per message.
		The additional communication and computation overhead remains below $1\%$ in all evaluated settings.  
		Because MAC generation and verification are performed on secret-shared values 
		using synchronized PRG seeds, the process is perfectly compatible with Shuffle-DP anonymity.
		As a result, RAIN achieves real-time integrity protection with negligible system cost 
		and without weakening privacy or scalability.
		
		\textbf{Malicious-Security Guarantee.}
		Combining two-server MPC computation with streaming MAC verification 
		achieves the same malicious-security level as prior verifiable shuffles but at linear cost.
		Any undetected deviation by one corrupted server would require either (i) forging a valid MAC tag 
		without knowledge of the complete key, or (ii) reconstructing inconsistent additive shares across servers.  
		Both events occur with negligible probability under standard MAC unforgeability and MPC correctness assumptions.
		Hence, RAIN’s shuffle-and-aggregation protocol is malicious-secure against any single colluding server:
		no message can be dropped, modified, or replayed without detection, 
		and every verified share is guaranteed to appear exactly once in the final aggregation.

		\section{Evaluation}
		\label{sec:Evaluation}
		
		\subsection{Experimental Setup}
		We evaluate our RAIN against both existing poisoning attacks and adaptive attacks in this section.
		
		\subsubsection{Datasets}
		
		We evaluate RAIN on multiple datasets from different domains, 
		including several image classification tasks and one human activity recognition task. 
		Following the standard practice in prior work~\cite{r24}, 
		we simulate heterogeneous data distributions among clients to reflect realistic FL scenarios. 
		Assume that each dataset contains $M$ classes; 
		we divide the clients into $M$ groups and allocate training examples probabilistically. 
		Each example with label $l$ is assigned to the group $l$ with probability $q>0$, 
		and to other groups with probability $\frac{1-q}{M-1}$. 
		Within each group, data are uniformly distributed to clients. 
		The parameter $q$ controls the non-IID degree of local data: 
		when $q = 1/M$, local data are IID; when $q > 1/M$, 
		the training data become increasingly skewed across clients. 
		Since non-IID data commonly occur in real FL deployments, 
		we adopt $q > 1/M$ by default to emulate such heterogeneity.
		
		\emph{MNIST.}
		The MNIST dataset~\cite{mnist} is a 10-class digit image classification dataset, which consists of 60,000 training examples and 10,000 testing examples. We set q = 0.1 in MNIST-0.1, which indicates local training data are IID among clients. We use MNIST-0.1 to show that RAIN is also effective in the IID setting.

		\emph{Fashion-MNIST.}
		The Fashion-MNIST dataset~\cite{fmnist} contains 60{,}000 training and 10{,}000 test images 
		across 10 clothing categories. 
		We distribute the samples to clients using $q = 0.5$ to simulate non-IID local data, 
		consistent with MNIST-0.5.
		
		\emph{CIFAR-10.}
		The CIFAR-10 dataset~\cite{cifar} includes 50{,}000 color images for training and 10{,}000 for testing, 
		spanning 10 object classes. 
		To emulate heterogeneous data distribution, 
		we again set $q = 0.5$ for client partitioning.

		\subsubsection{Poisoning Attacks for Evaluation}

		To comprehensively evaluate the robustness of RAIN,
		we consider both classical model poisoning attacks 
		and a recent adaptive attack specifically designed for DP-FL settings.
		
		\emph{Label Flipping (LF) Attack.}
		Following~\cite{r24}, each malicious client flips the label $l$ of its local samples 
		to $M - l - 1$, where $M$ is the total number of classes and $l \in \{0, 1, \ldots, M-1\}$.
		This untargeted attack misleads the global model by reversing label semantics.
		
		\emph{Krum Attack.}
		Krum~\cite{r15} is an untargeted local model poisoning attack 
		optimized for the Krum aggregation rule.
		Each malicious client uploads manipulated gradients crafted to be close to other malicious updates
		but far from benign ones, thereby biasing the aggregation result.
		
		\emph{Trim Attack.}
		Trim attack~\cite{r15} targets the Trim-mean and Median aggregation rules
		by injecting adversarial gradients that evade trimming thresholds.
		We adopt the same default parameters as in~\cite{r15}.
		
		\emph{Scaling Attack.}
		Following~\cite{r5}, the scaling attack is a targeted backdoor-style poisoning method.
		Each malicious client first augments a fraction $p$ of its local samples 
		with a predefined feature trigger and assigns them to an attacker-chosen label.
		The malicious client then scales its local model update by a large amplification factor $\lambda \gg 1$
		before uploading it to the server, causing the backdoor to dominate the aggregated gradient.
		
		\emph{Adaptive Attack on DP-FL(Attack-DPFL).}
		Beyond conventional poisoning settings, 
		we further evaluate an adaptive attack tailored to differentially private federated learning~\cite{r1}.
		In DP-FL, benign clients perturb their gradients with Gaussian noise, 
		resulting in large-norm perturbed gradients.
		Attack-DPFL exploits this property by re-scaling unperturbed poisoned gradients 
		to match the norm distribution of the noisy benign updates:
		\begin{equation}
			A_u = A \, G_u, \quad
			A = \frac{\|S_u\|}{\|G_u\|}, \quad
			S_u = h(G_u; l) + \mathcal{N}(0, \sigma^2).
		\end{equation}
		
		This alignment allows malicious updates to dominate the global aggregation 
		while bypassing the DP noise regularization.
		Compared with the conventional scaling or label-flipping attacks, 
		Attack-DPFL represents a stronger adaptive adversary 
		that specifically targets the robustness gap introduced by DP perturbation.
		RAIN is evaluated against all the above attacks 
		to demonstrate its resilience under both classical and DP-aware poisoning threats.

		\subsection{Utility}

		\begin{figure}[b]
			\centering
			% ---------- 第一行：单图居中 ----------
			\subfloat[FMNIST]{%
				\includegraphics[width=0.47\linewidth]{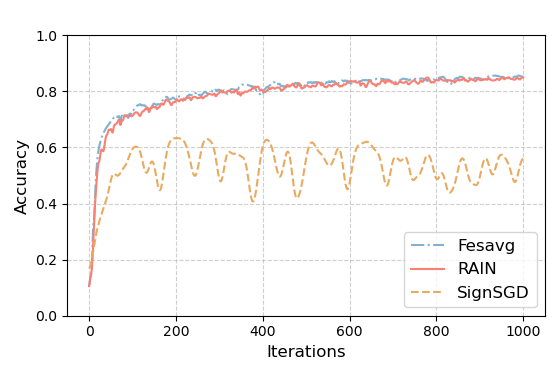}%
				\label{fig:fmnist_convergence}
			}\hfill
			% ---------- 第二行：两图并列 ----------
			\subfloat[MNIST]{%
				\includegraphics[width=0.47\linewidth]{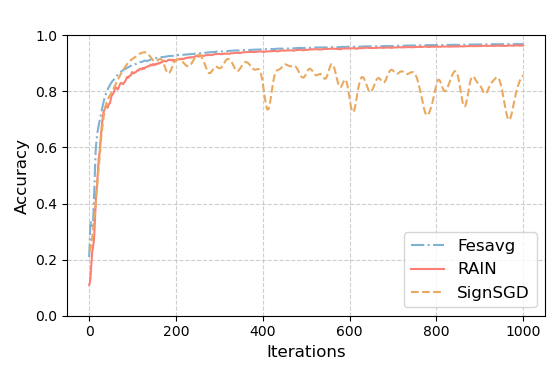}%
				\label{fig:mnist_convergence}
			}
			%	\subfloat[HAR]{%
				%		\includegraphics[width=0.49\linewidth]{har_convergence}%
				%		\label{fig:convergence-har}
				%	}
			
			\vspace{-1mm}
			\caption{
				\textbf{Convergence of RAIN under Shuffle-DP ($\rho = 0$).}
				Model accuracy (MA) over training iterations on FMNIST,and MNIST.
				RAIN achieves convergence comparable to FedAvg while significantly
				outperforming SignSGD in terms of stability and final accuracy across
				all datasets.
			}
			\label{fig:convergence}
			
		\end{figure}

		\subsubsection{Convergence under Shuffle-DP}
		The first experiment aims to verify that the proposed sign-space aggregation mechanism 
		does not hinder model convergence. 
		Since RAIN performs all robust aggregation operations directly in the 
		sign domain rather than in the real-valued gradient space,
		it is essential to confirm that the loss of magnitude information does not impair 
		training stability or convergence speed.
		We therefore evaluate RAIN in a benign setting ($\rho = 0$) 
		and without privacy noise,
		to isolate the impact of sign-space aggregation on model optimization.

		Figure~\ref{fig:fmnist_convergence} and Figure~\ref{fig:mnist_convergence} 
		show the evolution of model accuracy (MA) with training iterations 
		on FMNIST and MNIST datasets, respectively. 
		We compare RAIN with FedAvg (the canonical aggregation baseline) 
		and SignSGD \cite{r32}(a purely sign-based method without robustness adjustment).

		Across both datasets, RAIN converges to a similar MA level as FedAvg within the same number of training iterations on both FMNIST and MNIST datasets. This is because RAIN employs Hamming-distance–based trust scoring to adaptively weight sign-encoded updates, allowing the aggregator to approximate the behavior of full-precision averaging when $\delta = 0$. Although the sign encoding inevitably discards magnitude information, its impact on overall optimization is negligible, since the global direction of gradients is still preserved across clients. This empirical stability directly corroborates Lemma~\ref{lem:rain-tradeoff}, which predicts that the expected directional cosine between the aggregated sign vector and the true gradient remains above $0.7$ when $\varepsilon \!\ge\! 2$, ensuring convergence comparable to non-privatized training.

		In contrast, SignSGD converges much slower than both FedAvg and RAIN. 
		The reason is that SignSGD aggregates only the global sign of the summed gradients, 
		which behaves similarly to a coordinate-wise median operator in the sign domain. 
		Such a median-style rule implicitly excludes nearly $K-1$ values per coordinate in each round, 
		leading to severe information loss and oscillation in the global update direction. 
		Consequently, SignSGD introduces substantial degradation in convergence speed and final accuracy, 
		while RAIN maintains stable and efficient training dynamics comparable to FedAvg.
		
		\begin{figure}[h]
		\centering
		\subfloat[]{%
			\includegraphics[width=0.49\linewidth]{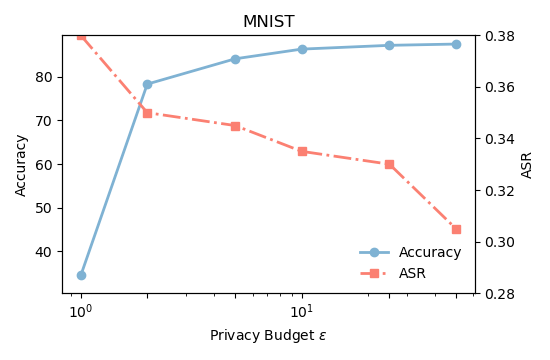}%
		}\hfill
		\subfloat[]{%
			\includegraphics[width=0.49\linewidth]{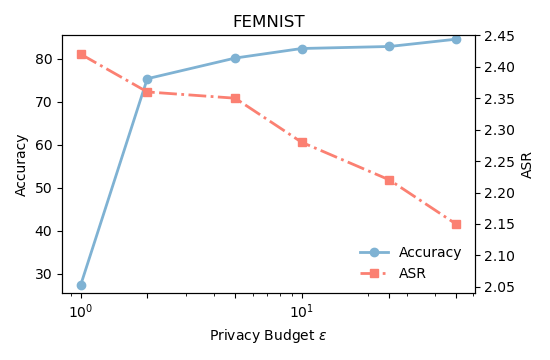}%
		}
		\vspace{-1mm}
		\caption{
			\textbf{Privacy–Robustness Trade-off of RAIN.}
			As the privacy budget $\varepsilon$ decreases, stronger DP noise slightly reduces model accuracy 
			but also lowers the attack success rate (ASR). 
		}
		\label{fig:rain-privacy-tradeoff}
		\vspace{-2mm}
	\end{figure}

		\subsubsection{Accuracy, Privacy, and Robustness Trade-off}
		
		\begin{figure*}[t]
			\centering
			\includegraphics[width=\textwidth]{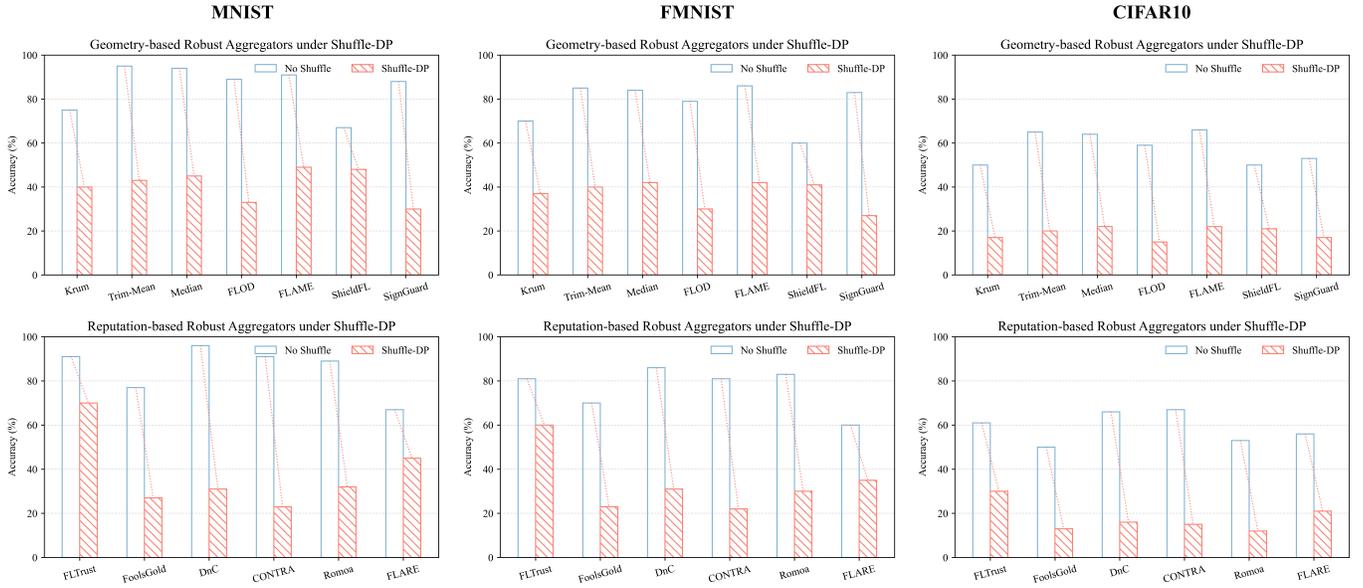}
			\caption{
				\textbf{Failure of existing robust aggregation methods under Shuffle-DP}
				Geometry-based methods (Krum, Trim-Mean, FLOD, FLAME, etc.) and 
				reputation-based methods (FLTrust, FoolsGold, DnC, CONTRA, FLARE, etc.) 
				are evaluated under the adaptive Attack-DPFL across three datasets.
			}
			\label{fig_4}
		\end{figure*}
		
		We further investigate the trade-off among accuracy, privacy, and robustness in RAIN
		by varying the privacy budget $\varepsilon$ under the Shuffle-DP setting. 
		As shown in Fig.~\ref{fig:rain-privacy-tradeoff}, 
		when the privacy guarantee becomes stronger (i.e., smaller $\varepsilon$), 
		model accuracy slightly decreases due to the heavier Gaussian noise injected during local randomization. 
		This degradation is expected since the stronger DP protection inevitably perturbs gradient signals. 
		However, RAIN still maintains competitive accuracy across all privacy levels, 
		showing that the sign-space aggregation is inherently robust to noise. 
		
		In addition, as $\varepsilon$ decreases, the attack success rate (ASR) also gradually declines, 
		indicating that stronger noise not only enhances privacy but can also suppress adversarial signal injection. 
		Nevertheless, excessive noise weakens the distinguishability between benign and malicious updates, 
		causing a mild loss in robustness when privacy is overly stringent. 
		Overall, these results demonstrate that RAIN achieves a favorable privacy–utility–robustness balance: 
		even under tight privacy budgets ($\varepsilon \le 5$), it retains over 70\% accuracy 
		while keeping ASR below 40\%, providing practical guidance for deploying Shuffle-DP FL systems.

		\subsection{Robustness}

		This section investigates the robustness of federated learning under adversarial conditions within the Shuffle-DP framework. 
		We first demonstrate that existing robust aggregation methods, both geometry-based (e.g., Krum, FLAME, FLOD) and reputation-based (e.g., FLTrust, FoolsGold, CONTRA), completely fail once client updates are anonymized and noised. 
		We then evaluate RAIN against state-of-the-art baselines under multiple poisoning and adaptive attacks across MNIST, FMNIST, and CIFAR-10.

		\subsubsection{Failure of Existing Robust Aggregation Methods.}
		We first investigate whether existing robust aggregation schemes remain effective 
		under the Shuffle-DP setting. 
		Figure~\ref{fig_4} presents the model accuracy under 
		the adaptive Attack-DPFL when 20\% of clients are malicious. 
		We evaluate both geometry-based methods (Krum, Trim-Mean, Median, FLOD, FLAME, ShieldFL, and SignGuard) 
		and reputation-based methods (FLTrust, FoolsGold, DnC, CONTRA, Romoa, and FLARE) 
		across MNIST, FMNIST, and CIFAR-10 datasets.
		
		The results show that nearly all existing robust FL approaches fail 
		once Shuffle-DP anonymization is applied.
		For geometry-based methods, local noise injection distorts the geometric relationships 
		among gradients, preventing distance- or angle-based filtering from identifying malicious updates. 
		Consequently, the test accuracy drops dramatically, often by more than 40–50\%.
		For example, FLAME and FLOD achieve around 90\% accuracy in the non-shuffled setting 
		but fall below 45\% under Shuffle-DP. 
		Similarly, reputation-based methods—whose effectiveness relies on persistent identity linkage 
		and temporal trust accumulation—lose the ability to track client behavior once the shuffler 
		removes client identifiers. 
		As a result, methods such as FLTrust and FoolsGold degrade to below 60\% accuracy.This experiment highlights a fundamental incompatibility between existing robust aggregation 
		mechanisms and the Shuffle-DP paradigm. 
		%When both geometric consistency and identity continuity are destroyed by shuffling and noise, 
		%the server can no longer infer which updates are trustworthy, 
		%causing traditional defenses to collapse under even moderate adversarial participation.
		
		\begin{figure}[t]
			\centering
			\includegraphics[width=1.05\linewidth]{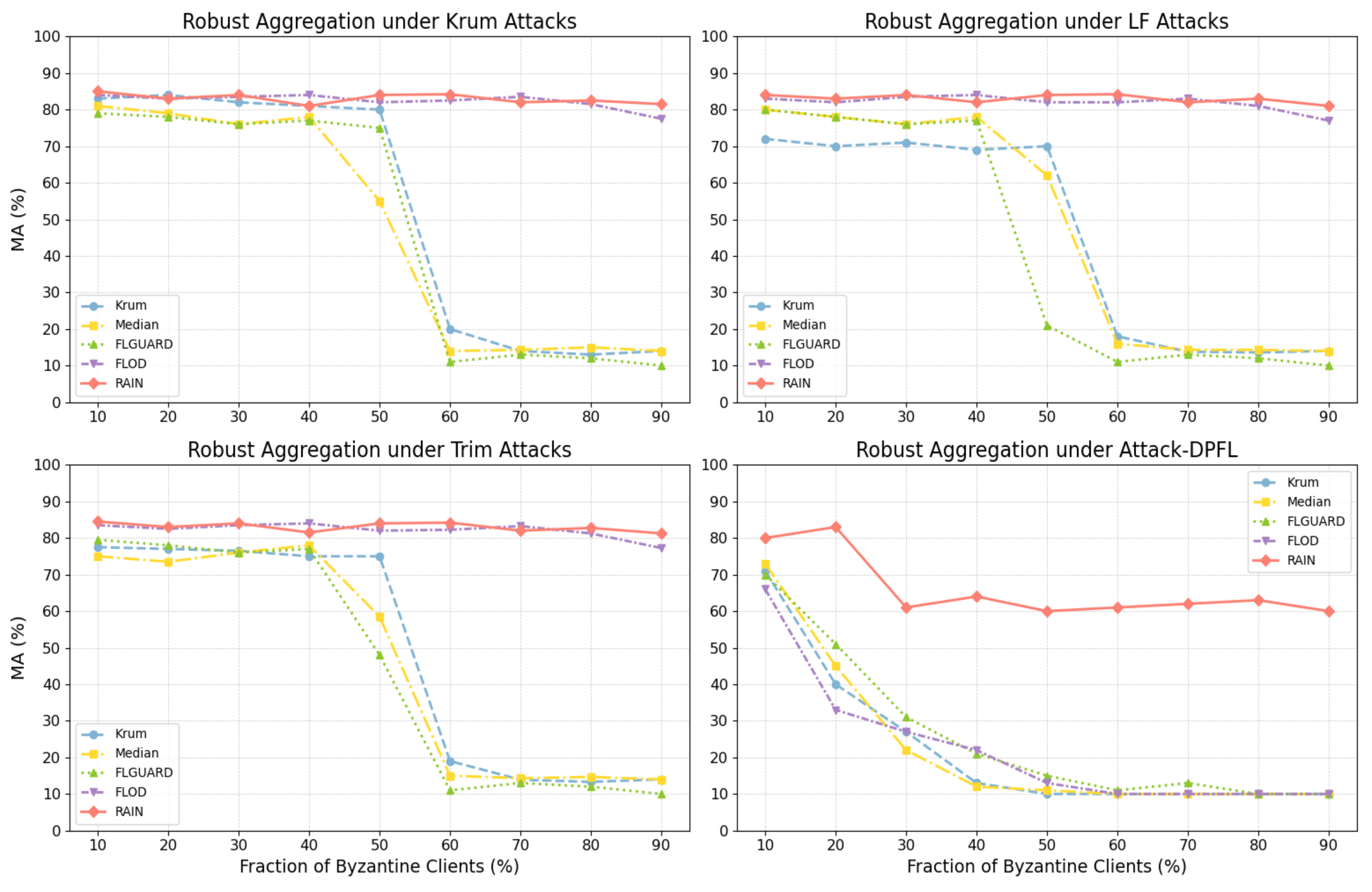}
			\caption{
				\textbf{Robust Aggregation of RAIN under Different Adversarial Attacks.}
				Each plot reports model accuracy as the fraction of Byzantine clients increases 
				from 10\% to 90\%. 
			}
			\label{fig:rain-mnist-attacks}
		\end{figure}

		\subsubsection{Impact of the number of malicious clients.}
		We further investigate the robustness of RAIN under various Byzantine and poisoning attacks on the MNIST dataset, 
		including Krum attack, Trim attack, Label Flipping (LF) attack, and the adaptive Attack-DPFL. 
		These attacks jointly cover both untargeted and targeted poisoning behaviors, 
		evaluating the model’s ability to resist malicious updates under different adversarial strategies. 
		As shown in Fig.~\ref{fig:rain-mnist-attacks}, 
		traditional robust aggregation schemes such as Krum, Median, FLGuard, and FLOD 
		exhibit severe performance degradation as the fraction of Byzantine clients increases. 
		Their model accuracy sharply drops below 20\% once the malicious ratio exceeds 50\%, 
		and continues to deteriorate as more adversaries join. 
		This confirms that existing geometry-based or reputation-based defenses 
		rely heavily on clean gradient structures or persistent client identities—both of which are destroyed by local perturbation and anonymization in Shuffle-DP. 
		
		In stark contrast, RAIN maintains consistently high accuracy across all attack types, 
		achieving around 80\% MA even when up to 90\% of clients behave maliciously. 
		This stability demonstrates that RAIN effectively mitigates both Byzantine manipulations 
		and adaptive poisoning by operating in the sign space rather than the raw gradient space. 
		Its Hamming-distance–based trust scoring bounds the influence of extreme or inconsistent sign updates, 
		allowing the aggregation process to remain stable even under heavy noise and anonymization. 
		Overall, these results highlight that RAIN achieves significantly stronger resilience 
		than state-of-the-art baselines while maintaining high utility in adversarial Shuffle-DP environments.

		\subsection{Efficiency}
		
		\subsubsection{Efficiency Analysis}

		We evaluate the communication efficiency of RAIN
		and compare it with representative secure Shuffle-DP frameworks,
		including FLGuard and Camel, on the MNIST and FMNIST datasets.
		Table~\ref{tab:comm_runtime_cost} summarizes the per-client communication cost
		and the server-side overall runtime,
		while Fig.~\ref{fig:rain-comm} further illustrates the communication breakdown
		under different communication paths and federation scales.

		\begin{table}[b]
			\centering
			\caption{Communication and runtime cost comparison}
			\label{tab:comm_runtime_cost}
			\begin{tabular}{lcccc}
				\toprule
				Method & Dataset &
				\makecell{Per-Client \\ Comm. Cost (KB)} &
				\makecell{Server-Side \\ Overall Runtime (s)} \\
				\midrule
				Camel~\cite{r33}   & MNIST  & 25.76 & 0.954 \\
				& FMNIST & 28.13 & 0.989 \\
				FLguard~\cite{r8} & MNIST  & 64.32 & 5.264 \\
				& FMNIST & 823.6 & 7.486 \\
				RAIN              & MNIST  & 25.30 & 0.910 \\
				& FMNIST & 76.43 & 0.980 \\
				\bottomrule
			\end{tabular}
		\end{table}
		
		For client-to-server (C--P) communication,
		Table~\ref{tab:comm_runtime_cost} shows that RAIN
		significantly reduces the per-client bandwidth consumption compared to FLGuard.
		On MNIST, RAIN requires 25.30\,KB per client per round,
		achieving a 60.7\% reduction relative to FLGuard (64.32\,KB),
		and exhibits communication cost comparable to Camel (25.76\,KB).
		On FMNIST, RAIN reduces the per-client communication volume by 90.7\%
		compared to FLGuard (823.6\,KB), while incurring higher overhead than Camel,
		due to differences in model dimensionality and encoding granularity.
		Overall, these results indicate that sign-space representation
		effectively lowers client-side communication compared to
		floating-point–based secure aggregation schemes.
		
		\begin{figure}[t]
			\centering
			% ---------- 第一行：单图居中 ----------
			\subfloat[C–P Communication]{%
				\includegraphics[width=0.50\linewidth]{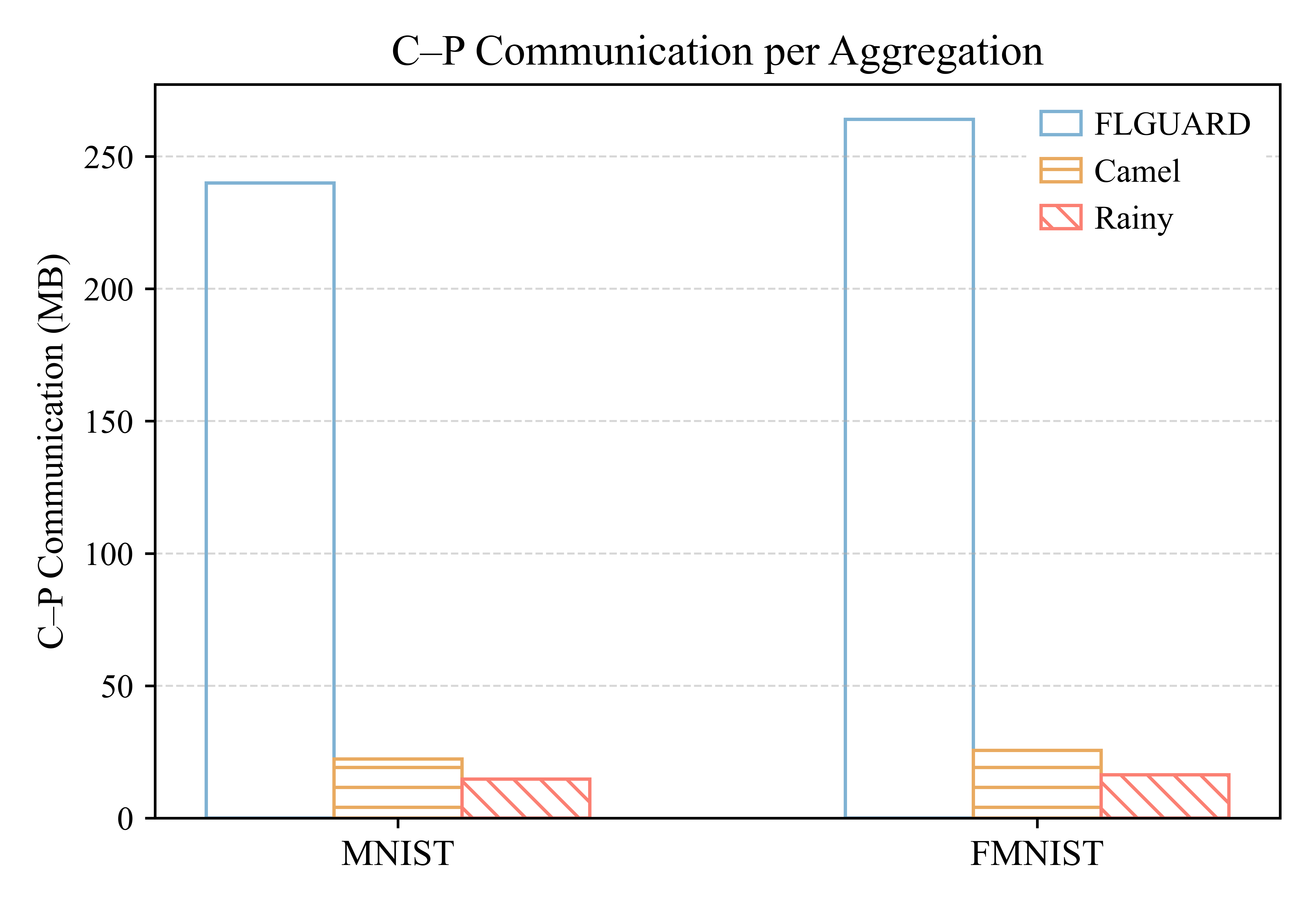}%
			}\\[-1mm]
			% ---------- 第二行：两图并列 ----------
			\subfloat[P–P Communication (MNIST)]{%
				\includegraphics[width=0.49\linewidth]{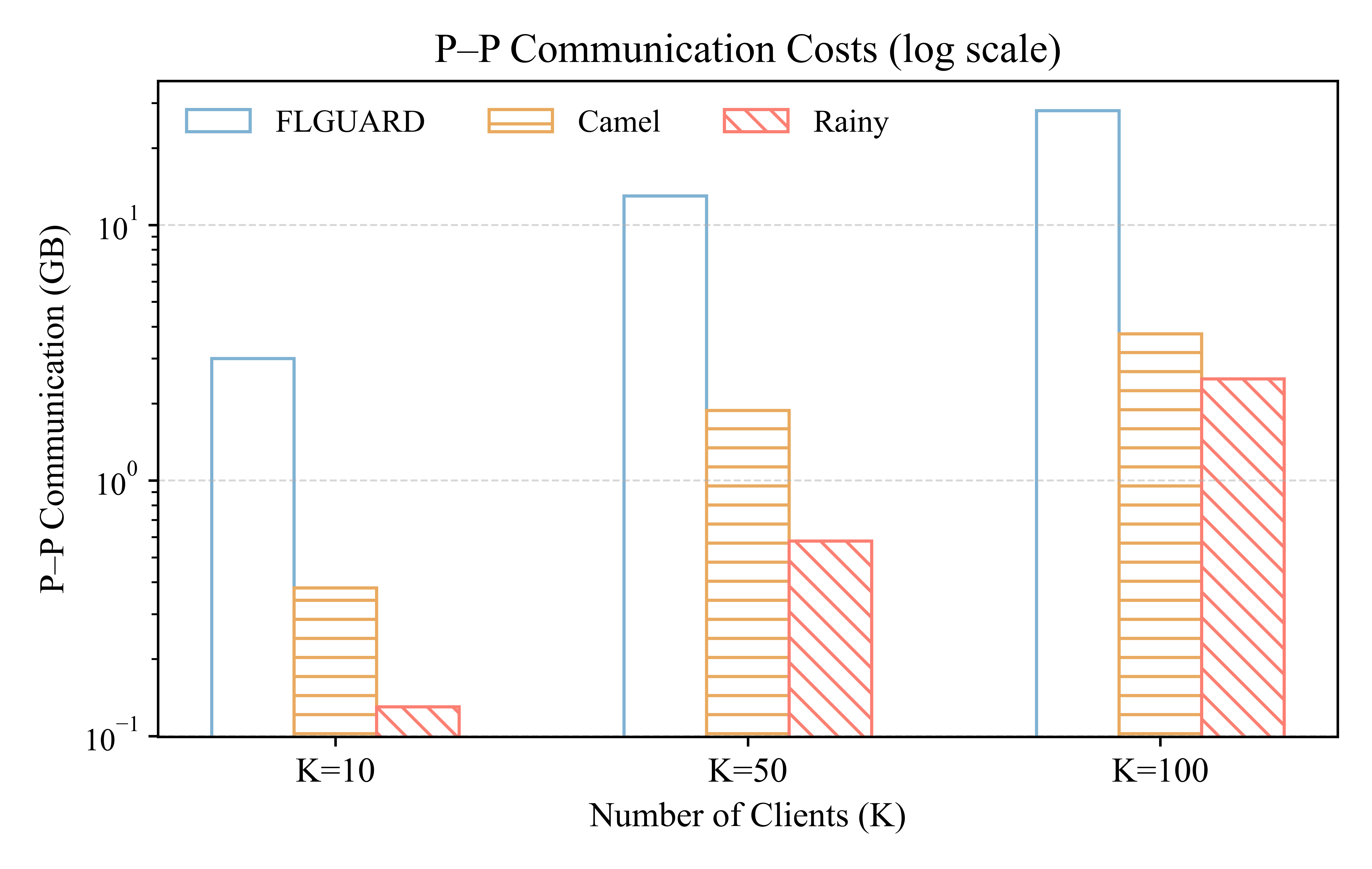}%
			}\hfill
			\subfloat[P–P Communication (FMNIST)]{%
				\includegraphics[width=0.49\linewidth]{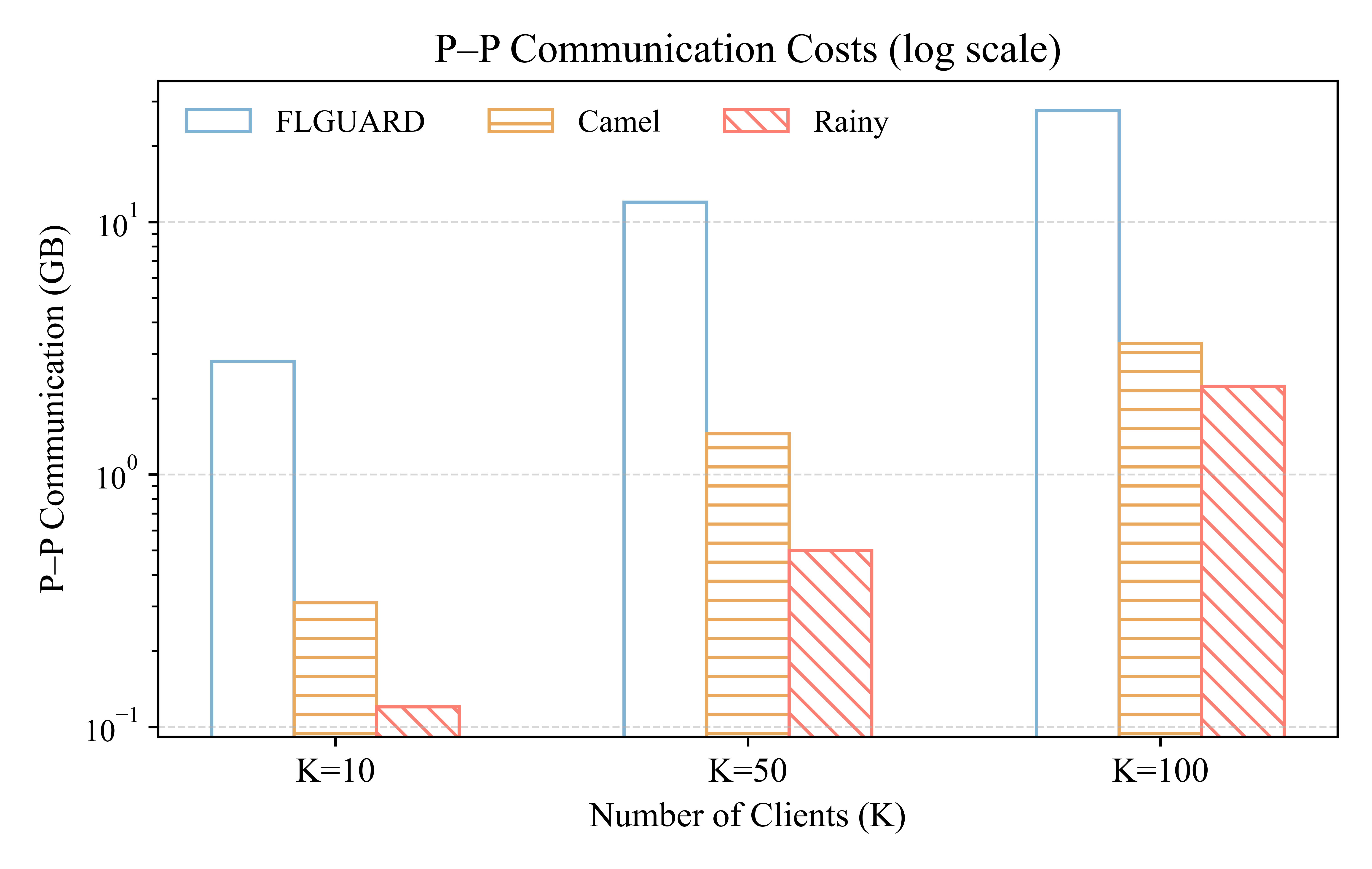}%
			}
			
			\vspace{-1mm}
			\caption{
				\textbf{Communication Efficiency of RAIN.}
				(a) Client-to-server (C–P) communication per aggregation on MNIST and FMNIST. 
				(b–c) Server-to-server (P–P) communication under different numbers of clients $K$ (log scale). 
			}
			\label{fig:rain-comm}
			\vspace{-2mm}
			
		\end{figure}
		
		For server-to-server (P--P) communication,
		Fig.~\ref{fig:rain-comm}(b--c) show that RAIN exhibits sub-linear growth
		with respect to the number of clients $K$,
		remaining in the megabyte (MB) range even for large-scale federations.
		In contrast, both Camel and FLGuard incur gigabyte-level (GB) communication
		overhead as $K$ increases,
		due to repeated re-masking, clustering, and multi-round coordination
		among aggregation servers.
		This demonstrates that RAIN offers substantially improved communication
		scalability under Shuffle-DP.

		We further evaluate the computation efficiency of RAIN
		and compare it with representative secure federated learning frameworks,
		including FLGuard and Camel.
		Table~\ref{tab:comp_cost} reports the per-client and server-side computation cost
		on the MNIST and FMNIST datasets.
		
		\begin{table}[t]
			\centering
			\caption{Computation cost comparison}
			\label{tab:comp_cost}
			\begin{tabular}{lcccc}
				\toprule
				Method & Dataset &
				\makecell{Per-Client \\ Comp. Cost (s)} &
				\makecell{Server-Side \\ Comp. Cost (s)} \\
				\midrule
				Camel~\cite{r33}   & MNIST  & 0.046 & 0.031 \\
				& FMNIST & 0.059 & 0.154 \\
				FLguard~\cite{r8} & MNIST  & 0.210 & 0.420 \\
				& FMNIST & 0.636 & 0.470 \\
				RAIN              & MNIST  & 0.055 & 0.052 \\
				& FMNIST & 0.063 & 0.067 \\
				\bottomrule
			\end{tabular}
		\end{table}

		Specifically, the per-client computation cost of RAIN 
		is only 0.055\,s on MNIST and 0.063\,s on FMNIST—approximately $4$–$5\times$ lower than FLGuard. 
		This efficiency gain is mainly attributed to the streamlined sign-space aggregation, 
		which eliminates expensive cryptographic operations during local updates. 
		On the server side, RAIN's computation cost remains lightweight ($\approx$0.05–0.07\,s) 
		due to its two-server MPC design, which performs secure shuffling and aggregation directly on additive shares 
		without costly zero-knowledge or re-masking procedures. 
		As a result, RAIN reduces the overall runtime by more than $7\times$ compared to FLGuard, 
		demonstrating its practicality for large-scale federated deployments under Shuffle-DP.

		\section{Discussion}
		
		Our analysis shows that RAIN achieves stable aggregation under
		moderate privacy budgets, where noise-induced sign flips remain
		bounded, indicating the existence of a practical operating
		regime for robust aggregation under Shuffle-DP.
		A natural direction for future work is to more precisely
		characterize this regime by establishing tighter relationships
		between privacy budgets, local noise, and DP-aware adversarial
		strategies.
		While such an analysis is feasible within the RAIN framework,
		it is left for future work due to scope and time constraints.
		
		To focus on this core interaction between privacy and robustness,
		the current study assumes independent and identically
		distributed (IID) client data.
		Extending RAIN to non-IID settings and to more realistic
		deployment scenarios, including asynchronous participation and
		stronger server-side adversaries, remains an important and
		promising direction for future research.
		
		\section{Conclusion}
		
		This work demonstrates that the shuffle model of differential privacy
		fundamentally breaks the assumptions underpinning existing geometry and
		reputation based robust aggregation defenses.
		Local noise erases meaningful geometric relations among updates, while shuffling
		destroys identity continuity, causing prior approaches to collapse under
		adaptive and Byzantine attacks.
		Taken together, these findings reveal a critical gap between privacy
		amplification and adversarial robustness in widely adopted Shuffle-DP pipelines.
		
		To bridge this gap, we propose RAIN, a unified framework that reconciles privacy,
		robustness, and verifiable integrity under Shuffle-DP.
		RAIN departs from geometry-dependent aggregation and instead operates in the sign
		space, leveraging consistency-based trust scoring to bound malicious influence
		without relying on client identities.
		By instantiating this mechanism within a secret-shared shuffle-and-aggregation pipeline and enforcing streaming integrity verification, RAIN achieves malicious
		security while preserving Shuffle-DP’s privacy guarantees and avoiding the heavy overheads typically associated with public-key or zero-knowledge techniques.
		
		Extensive evaluations across standard benchmarks show that RAIN is both secure
		and practical.
		It consistently withstands strong poisoning and adaptive attacks, maintains
		competitive model accuracy, and substantially improves communication efficiency
		and aggregation latency compared to prior robust defenses under Shuffle-DP.
		These results confirm that robust and verifiable federated learning can be
		achieved without sacrificing differential privacy.

		{\footnotesize
%			\nocite{*}  
			\bibliographystyle{IEEEtran}
			\bibliography{rainy}          % 不要写 ref.bib
		}
		
		\appendix
		
		\section{Sign-based Aggregation and Hamming Similarity Analysis}
		\label{app:sign_hamming}
		
		\subsection{Convergence of Sign-based Aggregation}
		
		Let $g_i \in \mathbb{R}^d$ be the local gradient of client $i$ and $g^*$ be the true global gradient.
		Each client uploads a sign-encoded vector $\tilde{g}_i = \operatorname{sign}(g_i + \mathcal{N}(0,\sigma^2))$
		under local randomization.
		Prior works~\cite{r37,r40} show that when each coordinate of $g_i$ independently satisfies
		$\Pr[\operatorname{sign}(\tilde{g}_{ij}) = \operatorname{sign}(g_j^*)] > \tfrac{1}{2}$,
		the aggregated sign $\tilde{g}_{\text{agg}} = \operatorname{sign}(\sum_i \tilde{g}_i)$
		is an unbiased estimator of $\operatorname{sign}(g^*)$ in expectation:
		\[
		\mathbb{E}[\tilde{g}_{\text{agg}}^j] = (2p_j - 1) \operatorname{sign}(g_j^*), \quad
		p_j = \Pr[\operatorname{sign}(\tilde{g}_{ij}) = \operatorname{sign}(g_j^*)].
		\]
		Thus, as long as the majority of clients are benign ($p_j > 0.5$), 
		the expected update direction remains aligned with the true gradient, 
		and convergence toward the global optimum is guaranteed in the same asymptotic rate as
		standard stochastic gradient descent~\cite{r32}.
		Formally,
		\[
		\mathbb{E}[f(w_{t+1}) - f(w^*)] \leq (1 - \eta\mu)^t \big(f(w_0) - f(w^*)\big) + O(\eta\sigma^2),
		\]
		where $\eta$ is the learning rate and $\mu$ is the smoothness parameter.
		This indicates that sign encoding, even under random noise and shuffling, 
		preserves convergence up to a bounded constant term.

		\subsection{Hamming Distance as a Robust Similarity Metric}
		
		To quantify the similarity between two sign vectors 
		$\tilde{w}_i, \tilde{w}_s \in \{-1,1\}^d$,
		we start from the cosine similarity:
		\[
		c_i = \frac{\tilde{w}_i \cdot \tilde{w}_s}{\|\tilde{w}_i\|\|\tilde{w}_s\|} 
		= \frac{1}{d}\sum_{j=1}^d \tilde{w}_{ij}\tilde{w}_{sj}.
		\]
		Since each coordinate takes values in $\{-1,1\}$, we can rewrite:
		\begin{align*}
			c_i 
			&= \frac{1}{d}\sum_{j=1}^{d}(1-2\mathbb{E}(\tilde{w}_{ij}))\cdot(1-2\mathbb{E}(\tilde{w}_{sj})) \\
			&= 1 - \frac{2}{d}\sum_{j=1}^{d}\big(\mathbb{E}(\tilde{w}_{ij}) + \mathbb{E}(\tilde{w}_{sj}) 
			- 2\mathbb{E}(\tilde{w}_{ij})\mathbb{E}(\tilde{w}_{sj})\big) \\
			&= 1 - \frac{2}{d}\sum_{j=1}^{d}\mathbb{E}(\tilde{w}_{ij}) \oplus \mathbb{E}(\tilde{w}_{sj}) \\
			&= 1 - 2\frac{\text{hd}_i}{d},
		\end{align*}
		where $\text{hd}_i = \sum_{j=1}^{d}\mathbf{1}[\tilde{w}_{ij}\neq \tilde{w}_{sj}]$
		is the Hamming distance between the two sign vectors.
		
		Thus, $c_i > 0 \Leftrightarrow 1 - 2\text{hd}_i/d > 0 \Leftrightarrow \text{hd}_i < d/2$.
		Therefore, with threshold $\tau = d/2$, we have
		\[
		\nu_i > 0 \Leftrightarrow c_i > 0,
		\]
		which means that a $\tau$-clipping Hamming distance–based rule is equivalent 
		to cosine-similarity clipping in detecting poisoned sign updates.
		Moreover, our $\tau$-clipping formulation
		\[
		w_i = \operatorname{ReLU}(\tau - \text{hd}_i)
		\]
		generalizes cosine-based clipping (e.g., FLTrust) by allowing tunable $\tau$,
		making the Hamming-distance metric more flexible and adaptive across datasets and noise levels.
		% that's all folks
	\end{document}